\newcommand{\NN}{\mathbb{N}}
\newcommand\vars{\mathsf{Vars}}
\newcommand\sat{\mathsf{Models}}
\newcommand\ssat{\#\mathsf{Models}}
\newcommand\assign{\mathsf{Assign}}
\newcommand{\rvars}[1]{\ensuremath{\bm{#1}}\xspace}
\newcommand{\X}{\rvars{X}}
\newcommand{\Y}{\rvars{Y}}
\newcommand{\Z}{\rvars{Z}}
\tikzset{
	rect/.style={
		rectangle,
		rounded corners,
		draw=black, 
		thick,
		text centered},
	rectw/.style={
		rectangle,
		rounded corners,
		draw=white, 
		thick,
		text centered},
	sq/.style={
		rectangle,
		draw=black, 
		thick,
		text centered},
	sqw/.style={
		rectangle,
		thick,
		text centered},
	arrout/.style={
		->,
		-latex,
		thick,
	},
	arrin/.style={
		<-,
		latex-,
		thick,
		El         },
	arrd/.style={
		<->,
		>=latex,
		thick,
	},
	arrw/.style={
		thick,
	}
}
\tikzset{
	circ/.style={
		circle,
		draw=black, 
		thick,
		text centered,
	},
	circw/.style={
		circle,
		draw=white, 
		thick,
		text centered,
	},
	arrout/.style={
		->,
		-latex,
		thick,
	},
	arrin/.style={
		<-,
		latex-,
		thick,
	},
	arrw/.style={
		-,
		thick,
	},
	arrww/.style={
		-,
		thick,
		draw=white, 
	}
}
\newcommand{\as}{\boldsymbol{a}}
\newcommand{\ba}{\mathbf{a}}
\newcommand{\w}{\mathbf{w}}
\newcommand{\A}{A}
\newcommand{\feat}[1]{\mbox{\bf \footnotesize #1}}
\begin{document}

\title{A Circus of Circuits:\texorpdfstring{\\}{ }Connections Between Decision Diagrams, Circuits, and Automata}

\author{Antoine Amarilli \and Marcelo Arenas \and YooJung Choi \and Mikaël Monet \and Guy Van den Broeck \and Benjie Wang}

\date{}
\maketitle

\begin{abstract}
  This document is an introduction to two related formalisms to define Boolean
  functions: binary decision diagrams, and Boolean circuits. It presents these
  formalisms and several of their variants studied in the setting of knowledge
  compilation. Last, it explains how these formalisms can be connected to the
  notions of automata over words and trees.
\end{abstract}

\tableofcontents

\pagebreak

\section{Introduction} \label{sec:intro}
This document is about \emph{Boolean functions} and formalisms to represent
them. Given their naturalness, many communities in theoretical and applied
computer science use Boolean functions and study their representations. The
\emph{circuit complexity} community~\citep{vollmer1999introduction}, for
instance, studies classes of \emph{Boolean circuits} defined by conditions such
as the \emph{depth}, the \emph{size}, and the \emph{kinds of gates} that are
allowed. Other communities study \emph{Boolean
functions}~\citep{wegener1987complexity}, or ways to represent them concisely,
in particular as \emph{decision diagrams} \citep{W04}. Meanwhile, the
\emph{knowledge compilation} community, motivated by the practical use case of
\emph{SAT solvers}, investigates formalisms to represent Boolean functions as
\emph{binary decision diagrams}, or \emph{Boolean circuits}, while ensuring the tractability of certain tasks. A central work in this area, but dating back to 2002, is the \emph{knowledge compilation map}~\citep{darwiche2002knowledge}; the area also uses tools from neighboring fields such as \emph{communication complexity}~\citep{kushilevitz1997communication}.
Last, the \emph{database theory} community has also investigated Boolean functions, in particular in the setting of \emph{provenance}~\citep{green2007provenance}, including representations such as \emph{provenance circuits}~\citep{deutch2014circuits}: in this area, restricted Boolean circuit classes were in particular studied for query evaluation on \emph{probabilistic databases}~\citep{jha2013knowledge}.
All told, this illustrates that the notions of \emph{decision diagrams} (e.g., OBDDs) and \emph{restricted circuit classes} (e.g., d-DNNFs) are studied to a large extent by separate communities.

The goal of this document is twofold. First, we propose a unified introduction
to \emph{binary decision diagrams} and \emph{Boolean circuits}, using consistent terminology across the two paradigms. We hope that this can serve as an introduction to researchers familiar with one of the two formalisms, as a way to understand the connections with the other formalism. Second, we present a lesser-known correspondence that relates \emph{automata on words and trees} to these binary decision diagrams and Boolean circuit classes, specifically, to the setting of \emph{ordered} binary decision diagrams and \emph{structured} Boolean circuit classes. The connection goes through the notion of \emph{provenance circuits}, which relates automata to a Boolean function informally describing their behavior on words of a specific length: intuitively, provenance circuits are obtained by unraveling the automaton up to that length. The point of this correspondence is that well-known conditions on finite automata (e.g., determinism, unambiguity) relate to conditions on the resulting ordered binary decision diagrams (for word automata) or structured Boolean circuits (for tree automata).

Thus, our hope is that this document can serve as a ``Rosetta stone'' to highlight connections between binary decision diagrams, Boolean circuits, and automata, and encourage further interaction between the communities studying these formalisms.

\paragraph*{Document structure.} The document is structured in the following way. First, in Section~\ref{sec:prelims}, we give some common preliminaries. We then define the two main formalisms we use to represent Boolean functions: we define \emph{binary decision diagrams} in Section~\ref{sec:diagrams}, and then define \emph{Boolean circuits} in Section~\ref{sec:circuits}. We then explain in Section~\ref{sec:diag_circ} in which sense binary decision diagrams can be seen as a special case of Boolean circuits. Then, we introduce in Section~\ref{sec:automata} the notions of automata on words and on trees, and we explain how we can define \emph{provenance circuits} for automata: this allows us to draw a correspondence between conditions on automata and conditions on provenance circuits.

\paragraph*{Acknowledgements.} This work was initiated while the authors were visiting the Simons Institute for the Theory of Computing, and was done in part during the corresponding program at the institute.

\section{Preliminaries} \label{sec:prelims}
We write sets of variables with uppercase boldface letters, such as $\X,\Y,\Z$, and single
variables with uppercase non-bold letters, such as~$X\in \X, Y\in \Y$, etc.

\paragraph*{Assignments and Boolean functions.}
Let~$\X$ be a finite set of variables.  An \emph{assignment of~$\X$} is a
function~$\ba: \X \to \{0,1\}$. We denote by~$\assign(\X)$ the set of all assignments
of~$\X$.  A \emph{Boolean function over~$\mathbf{X}$}
is a function~$f\colon \assign(\X)\to \{0,1\}$.
An assignment~$\ba$ of $\mathbf{X}$ is \emph{satisfying} if~$f(\ba) =1$ (also
denoted~$\ba \models f$).  We denote by~$\sat(f)\subseteq \assign(\X)$ the set of all
satisfying assignments of~$f$, and~$\ssat(f)$ the size of this set. 

\paragraph*{CNFs and DNFs.}
Let $\X$ a set of variables. A \emph{literal} is an expression of the form $X$
or $\lnot X$ for $X\in \X$. A \emph{clause} is a disjunction of literals, for
instance $X_1 \lor X_2 \lor \lnot X_3$. A \emph{formula in conjunctive normal
form}, or \emph{CNF} for short, is a formula that is a conjunction of clauses,
i.e., an expression of the form $\bigwedge_{i=1}^n C_i$ where each $C_i$ is a
clause. We call a conjunction of literals a \emph{term}. A \emph{formula in
disjunctive normal form}, or DNF, is a disjunction of terms i.e., an expression
of the form $\bigvee_{i=1}^n t_i$ where each $t_i$ is a term.

\paragraph*{Directed graphs, DAGs, labeled DAGs.}
A {\em directed graph} is a tuple $G = (N, E)$ where $N$ is the set of nodes and $E \subseteq N \times N$ is the set of edges. We say that $G$ is \emph{acyclic} if it contains no cycles, and call it a \emph{directed acyclic graph} or \emph{DAG}.

For a finite set $\Sigma$ of labels, a \emph{$\Sigma$-node-labeled directed graph}, or just node-labeled directed graph, is a tuple $G = (N, E, \lambda)$ where $(N, E)$ is a directed graph and $\lambda\colon N\to\Sigma$ is the node labeling function. A \emph{$\Sigma$-labeled directed graph}, or just labeled directed graph, is a node-labeled directed graph whose edges additionally carry labels. Formally, a $\Sigma$-labeled directed graph is a tuple $G = (N, E, \lambda)$ such that $E \subseteq N \times \Sigma \times N$ and $(N, \{(x,y) \mid \exists \ell \colon (x, \ell, y) \in E\}, \lambda)$ is a $\Sigma$-node-labeled directed graph. 
A node $u$ of $G$ is called a {\em source}
if it does not have any incoming edges, a {\em sink} if it does not have any outgoing edges, and an {\em internal node} if it has some outgoing edges.
If a labeled directed graph is acyclic, we call $G$ a 
\emph{labeled DAG}.
Notice that the previous definition allows a labeled directed graph~$G$ to have multiple edges between the same pair of nodes, but each one with a different label.
We define the \emph{size of $G$}, written $|G|$, to be $|N| + |E| + |\Sigma|$.

\section{Binary Decision Diagrams} \label{sec:diagrams}
\newcommand{\true}{\mathsf{t}}
\newcommand{\false}{\mathsf{f}}
\newcommand{\strue}{{\footnotesize $\mathsf{t}$}}
\newcommand{\sfalse}{{\footnotesize $\mathsf{f}$}}

\newcommand{\D}{\mathcal{D}}

\newcommand{\nbdd}{\textsf{nBDD}\xspace}
\newcommand{\nbdds}{\textsf{nBDDs}\xspace}
\newcommand{\ubdd}{\textsf{uBDD}\xspace}
\newcommand{\bdd}{\textsf{BDD}\xspace}
\newcommand{\bdds}{\textsf{BDDs}\xspace}

\newcommand{\nfbdd}{\textsf{nFBDD}\xspace}
\newcommand{\nfbdds}{\textsf{nFBDDs}\xspace}
\newcommand{\Nfbdd}{\textsf{NFBDD}\xspace}
\newcommand{\nobdd}{\textsf{nOBDD}\xspace}

\newcommand{\pbdds}{\textsf{PBDDs}\xspace}

\newcommand{\ndt}{\textsf{nDT}\xspace}
\newcommand{\ndts}{\textsf{nDTs}\xspace}

\newcommand{\ndf}{\textsf{nDF}\xspace}
\newcommand{\ndfs}{\textsf{nDFs}\xspace}

\newcommand{\ufbdd}{\textsf{uFBDD}\xspace}
\newcommand{\ufbdds}{\textsf{uFBDDs}\xspace}
\newcommand{\uobdd}{\textsf{uOBDD}\xspace}

\newcommand{\fbdd}{\textsf{FBDD}\xspace}
\newcommand{\fbdds}{\textsf{FBDDs}\xspace}
\newcommand{\obdd}{\textsf{OBDD}\xspace}
\newcommand{\obdds}{\textsf{OBDDs}\xspace}
\newcommand{\nobdds}{\textsf{nOBDDs}\xspace}

\newcommand{\udt}{\textsf{uDT}\xspace}
\newcommand{\ddt}{\textsf{DT}\xspace}
\newcommand{\ddts}{\textsf{DTs}\xspace}

\newcommand{\nodt}{\textsf{nODT}\xspace}
\newcommand{\uodt}{\textsf{uODT}\xspace}
\newcommand{\odt}{\textsf{ODT}\xspace}

We start by giving formal definitions of (nondeterministic) binary decision diagrams
and their semantics, and present two conditions on diagrams: variable
structuredness (corresponding to free or ordered binary decision diagrams), and
ambiguity levels. We then comment on an alternative way to express
nondeterminism. After that, we present the case of binary decision diagrams
without sharing (aka decision trees), and discuss the notion of completeness for
binary decision diagrams, along with an alternative semantics called the zero-suppressed semantics.

\subsection{Basic classes}
A \emph{nondeterministic binary decision diagram} (\nbdd~\citep{BW97,ACMS20}) over  a
set of variables $\X$ is a 
labeled DAG~$\D$ 
such that: (i) each
edge is labeled with either the symbol $0$ 
(a \emph{$0$-edge})
or the symbol $1$
(a \emph{$1$-edge});
(ii) each sink is labeled
with~$\true$ (for true) or $\false$ (for false); and (iii) each internal
node is labeled
with a variable $X \in \X$ and has at least one outgoing $0$-edge and at least one outgoing $1$-edge. 
Given an assignment $\as \in \assign(\X)$, a \emph{run} $\pi$ of~$\D$ \emph{following} $\as$ is a sequence 
$u_1, u_2, \ldots, u_k$ of nodes of $\D$ such that $u_1$ is a source of $\D$, $u_k$ is a sink of~$\D$, and for each $i \in \{1, \ldots, k - 1\}$, letting $X_i$ be the variable that labels $u_i$, there is an edge in~$\D$ from $u_i$ to $u_{i+1}$ with label $\as(X_i)$. Note that there is always at least one run following~$\as$, which we can obtain by starting at an arbitrary source and following edges until we reach a sink: indeed, condition~(iii) ensures that every node has an outgoing edge with the correct label.
If $\pi$ follows~$\as$, we also say that $\as$ is \emph{consistent} with~$\pi$: note that $\pi$ can be consistent with different assignments if some variables do not occur as the label of any node of~$\pi$. The run $\pi$ is 
\emph{accepting} if the label of the sink at the end of~$\pi$ is $\true$.
An assignment $\as$ is \emph{accepted} by~$\D$, denoted by $\D(\as) = 1$ or $\as \vDash \D$, if there exists an accepting run of~$\D$ following~$\as$; otherwise $\as$ is \emph{rejected} by~$\D$, denoted by $\D(\as) = 0$ or $\as \nvDash \D$.
In this paper, we sometimes refer to a run without explicitly mentioning an assignment that it follows. Finally, observe that $\D$ represents a Boolean function over the set of variables $\X$, and we will often identify $\D$ with the Boolean function that it represents. 

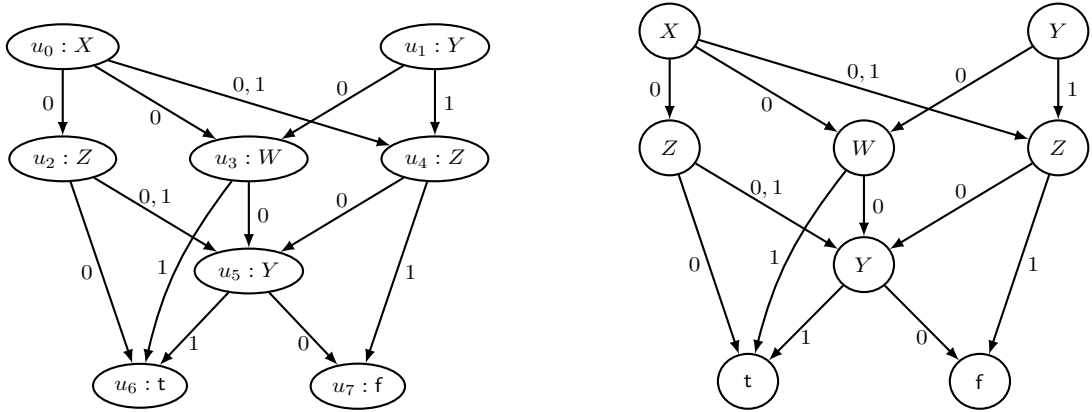
\begin{figure}
\begin{center}
\begin{subfigure}{0.45\textwidth}
\centering
\begin{tikzpicture}[yscale=0.73, xscale=0.80, transform shape]
  \node[thick, draw=black, ellipse, minimum height=8mm, text centered] (n1) {$u_0 : X$};
  \node[thick, draw=black, ellipse, minimum height=8mm, below=12mm of n1, text centered] (n2) {$u_2 : Z$}
    edge[arrin] node[left] {$0$} (n1);
  \node[thick, draw=black, ellipse, minimum height=8mm, right=12mm of n2, text centered] (n3) {$u_3 : W$}
    edge[arrin] node[below] {$0$} (n1);
  \node[thick, draw=black, ellipse, minimum height=8mm, right=12mm of n3, text centered] (n4) {$u_4 : Z$}
    edge[arrin] node[above] {$0,1$} (n1);
  \node[thick, draw=black, ellipse, minimum height=8mm, below=12mm of n3, text centered] (n5) {$u_5 : Y$}
    edge[arrin] node[above] {$0$} (n4)
    edge[arrin] node[above] {$0,1$} (n2)
    edge[arrin] node[right] {$0$} (n3);
  \node[thick, draw=black, ellipse, minimum height=8mm, below left=15mm and 6mm of n5, text centered] (n6) {$u_6 : \true$}
    edge[arrin] node[below] {$1$} (n5)
    edge[arrin, bend left=4mm] node[left] {$1$} (n3)
    edge[arrin] node[left] {$0$} (n2);
  \node[thick, draw=black, ellipse, minimum height=8mm, below right=15mm and 6mm of n5, text centered] (n7) {$u_7 : \false$}
    edge[arrin] node[below] {$0$} (n5)
    edge[arrin] node[right] {$1$} (n4);
  \node[thick, draw=black, ellipse, minimum height=8mm, above=12mm of n4, text centered] (n2) {$u_1 : Y$}
    edge[arrout] node[right] {$1$} (n4)
    edge[arrout] node[above] {$0$} (n3);
\end{tikzpicture}
\caption{A graphical representation of an \nbdd including node identifiers and their labels.}
\label{fig-nbdd-with-id}
\end{subfigure}
\hspace{5mm}
\begin{subfigure}{0.45\textwidth}
\centering
\begin{tikzpicture}[yscale=0.73, xscale=0.80, transform shape]
  \node[circ, minimum size=10mm] (n1) {$X$};
  \node[circ, minimum size=10mm, below=11mm of n1] (n2) {$Z$}
    edge[arrin] node[left] {$0$} (n1);
  \node[circ, minimum size=10mm, right=22mm of n2] (n3) {$W$}
    edge[arrin] node[below] {$0$} (n1);
  \node[circ, minimum size=10mm, right=22mm of n3] (n4) {$Z$}
    edge[arrin] node[above] {$0,1$} (n1);
  \node[circ, minimum size=10mm, below=11mm of n3] (n5) {$Y$}
    edge[arrin] node[above] {$0$} (n4)
    edge[arrin] node[above] {$0,1$} (n2)
    edge[arrin] node[right] {$0$} (n3);
  \node[circ, minimum size=10mm, below left=14mm and 12mm of n5] (n6) {$\true$}
    edge[arrin] node[below] {$1$} (n5)
    edge[arrin, bend left=4mm] node[left] {$1$} (n3)
    edge[arrin] node[left] {$0$} (n2);
  \node[circ, minimum size=10mm, below right=14mm and 12mm of n5] (n7) {$\false$}
    edge[arrin] node[below] {$0$} (n5)
    edge[arrin] node[right] {$1$} (n4);
  \node[circ, minimum size=10mm, above=11mm of n4] (n2) {$Y$}
    edge[arrout] node[above] {$0$} (n3)
    edge[arrout] node[right] {$1$} (n4);
\end{tikzpicture}
\caption{The usual graphical representation of an \nbdd, where node identifiers are not included.}
\label{fig-nbdd-without-id}
\end{subfigure}
\end{center}
\caption{An \nbdd over the set of variables $\X = \{X,Y,Z,W\}$. \label{fig-nbdd}}
\end{figure}

\begin{example}
  An \nbdd $\D$ over the set of variables $\X = \{X,Y,Z,W\}$ is shown in Figure~\ref{fig-nbdd-with-id}. For each node we include its identifier and label; for instance, $u_0 : X$ indicates that node $u_0$ has label~$X$. 
  Moreover, we depict edges with their labels; for instance, $(u_0, 0, u_2)$ is the only edge from $u_0$ to $u_2$, while $(u_0, 0, u_4)$, $(u_0, 1, u_4)$ are the two edges from $u_0$ to $u_4$ (the symbol $0,1$ next to the edge from $u_0$ to $u_4$ is used to denote two edges, one with label $0$ and the other one with label $1$).
  The two sources
  of $\D$ are $u_0$ and $u_1$, while the two sinks of $\D$ are $u_6$ and $u_7$.

  Consider the assignment $\as_1 \in \assign(\X)$ such that $\as_1(X) = \as_1(Y) = \as_1(Z) = \as_1(W) = 0$. Then we have that $\as_1$ is consistent with the run $\pi_1 = u_0, u_2, u_6$. Notice that an assignment can be consistent with many different runs of an \nbdd; for instance, $\as_1$ is consistent with $\pi_1$ as well as with the run $\pi_2 = u_1, u_3, u_5, u_7$. An assignment is accepted by an \nbdd if there exists at least one accepting run of the \nbdd that is consistent with it; for instance $\D(\as_1) = 1$ since $\as_1$ is consistent with the accepting run $\pi_1$ $(\D(\as_1) = 1$ despite the fact that $\as_1$ is consistent with the run $\pi_2 = u_1, u_3, u_5, u_7$ and the label of sink $u_7$ is $\false)$. 

We depict in Figure \ref{fig-nbdd-without-id} the same \nbdd as in Figure \ref{fig-nbdd-with-id}, but without including node identifiers. The usual graphical representation of an \nbdd is the one given in  Figure \ref{fig-nbdd-without-id}.
\end{example}

\noindent
Nondeterministic binary decision diagrams are classified according to two dimensions.
\begin{itemize}
  \item {\bf Variable structuredness (free, ordered).} Let $\D$ be an \nbdd over a set of variables $\X$. Then $\D$ is \emph{free} (\nfbdd) if, for every run $\pi$ of $\D$, no two distinct nodes in $\pi$ have the same label. In addition, $\D$ is {\em ordered} (\nobdd) if there exists a linear order $<$ on the set $\X$ such that, if a node $u_1$ appears before a node $u_2$ in some run of $\D$, then, letting $X_1 \in \X$ be the label of $u_1$ and $X_2 \in \X$ the label of $u_2$,
    we have 
    $X_1 < X_2$. Notice that an $\nobdd$ is in particular an $\nfbdd$.

    The notions of $\nfbdd$ and $\nobdd$ are defined in terms of the runs of an $\nbdd$. Thanks to condition~(iii) of the definition of an $\nbdd$, such notions can be equivalently defined in terms of the paths of an $\nbdd$. In particular, it would be equivalent to say that an $\nbdd$ $\D$ is free if for every (directed) path $\pi$ in $\D$, no two distinct nodes in $\pi$ have the same label, and likewise for the condition that the \nbdd is ordered.%

\item {\bf Ambiguity level.} Let $\D$ be an \nbdd over a set of variables $\X$. Then $\D$ is {\em unambiguous} (\ubdd) if, for every assignment $\as \in \assign(\X)$, there exists at most one accepting run of $\D$ that is consistent with $\as$. Moreover, $\D$ is {\em deterministic}, which is referred to as \bdd \citep{L59,W04}), if, for every assignment $\as \in \assign(\X)$, there exists exactly one run of~$\D$ that is consistent with $\as$.

As mentioned before, thanks to condition~(iii) of the definition of an \nbdd,
for every \nbdd~$\D$ over a set of variables $\X$ and every $\as \in \assign(\X)$, there exists at least one run of $\D$ that is consistent with $\as$.
 In the case where $\D$ is deterministic, such a run must be unique. This leads to the following equivalent definition of a \bdd, which is the most commonly used in the literature: $\D$ is a \bdd if and only if 
$\D$ has a single source,
which is called the \emph{root} of $\D$, and every internal node of $\D$ has exactly one outgoing $0$-edge and exactly one outgoing $1$-edge.
\end{itemize}

\begin{table}
\caption{Classification of nondeterministic binary decision diagrams based on ambiguity level (nondeterministic, unambiguous, or deterministic) and variable structuredness (unrestricted, free, or ordered).\label{tab-bdd}}
\begin{center}
\begin{tabular}{lccc}\toprule
  & \bfseries Unrestricted & \bfseries Free & \bfseries Ordered\\\midrule
 \bfseries Nondeterministic & \nbdd & \nfbdd & \nobdd\\
 \bfseries Unambiguous & \ubdd & \ufbdd & \uobdd\\
 \bfseries Deterministic & \bdd & \fbdd & \obdd\\\bottomrule
\end{tabular}
\end{center}
\end{table}

The combination of the previous two dimensions gives rises to 9 different classes of nondeterministic binary decision diagrams, which are shown in Table~\ref{tab-bdd}.
The most widely used models among these classes are the deterministic variants, i.e., binary decision diagrams (\bdd) \citep{L59}, free binary decision diagrams (\fbdd) \citep{FHS78,BCW80}, and ordered binary decision diagrams (\obdd) \citep{B86}. An \fbdd is also referred to as a \emph{read-once branching program} in the literature, where the term \emph{nondeterministic read-once branching program}~\citep{razgon2014no} is used to denote\footnote{Note that \citep{razgon2014no} also defines a notion of \nfbdd, or \emph{normalized FBDD}, which is different from the \nfbdds that we consider.} \nfbdds.

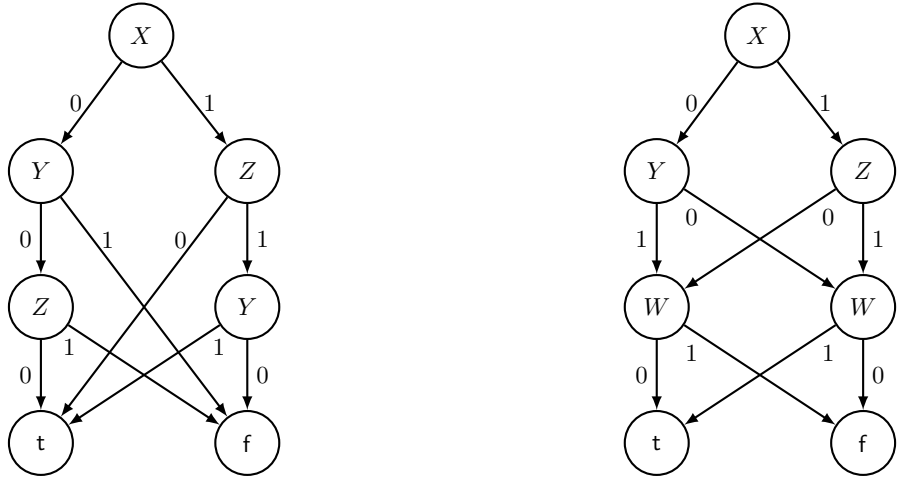
\begin{figure}
\begin{center}
\begin{subfigure}{0.45\textwidth}
\centering
\begin{tikzpicture}[scale=0.85, transform shape]
  \node[minimum size=10mm] (n1) {};
  \node[circ, right=5.5mm of n1, minimum size=10mm] (n0) {$X$};
  \node[circ, minimum size=10mm, below=11mm of n1] (n3) {$Y$}
      edge[arrin] node[left] {$0$} (n0);
  \node[circ, minimum size=10mm, right=22mm of n3] (n4) {$Z$}
    edge[arrin] node[right] {$1$} (n0);
  \node[circ, minimum size=10mm, below=11mm of n3] (n5) {$Z$}
    edge[arrin] node[left] {$0$} (n3);
  \node[circ, minimum size=10mm, below=11mm of n4] (n9) {$Y$}
    edge[arrin] node[right] {$1$} (n4);

\node[circ, minimum size=10mm, below=11mm of n5] (n6) {$\true$}
    edge[arrin] node[left] {} (n4)
    edge[arrin] node[below] {} (n9)
    edge[arrin] node[left] {$0$} (n5);
  \node[circ, minimum size=10mm, below=11mm of n9] (n7) {$\false$}
    edge[arrin] node[right] {} (n3)
    edge[arrin] node[right] {$0$} (n9)
    edge[arrin] node[below] {} (n5);

\node[below right= 4.5mm and 4.5mm of n3] (e39) {$1$};

\node[below left= 4.5mm and 4.5mm of n4] (e48) {$0$};

\node[below left= 0mm and -1.5mm of n9] (e9f) {$1$};

\node[below right= 0mm and -1.5mm of n5] (e5f) {$1$};
\end{tikzpicture}
\caption{A free binary
decision diagram.}
\label{fig-fbdd}
\end{subfigure}
\hspace{5mm}
\begin{subfigure}{0.45\textwidth}
\centering
\begin{tikzpicture}[scale=0.85, transform shape]
  \node[minimum size=10mm] (n1) {};
  \node[circ, right=5.5mm of n1, minimum size=10mm] (n0) {$X$};
  \node[circ, minimum size=10mm, below=11mm of n1] (n3) {$Y$}
      edge[arrin] node[left] {$0$} (n0);
  \node[circ, minimum size=10mm, right=22mm of n3] (n4) {$Z$}
    edge[arrin] node[right] {$1$} (n0);
  \node[circ, minimum size=10mm, below=11mm of n3] (n5) {$W$}
    edge[arrin] node[left] {} (n4)
    edge[arrin] node[left] {$1$} (n3);
  \node[circ, minimum size=10mm, below=11mm of n4] (n9) {$W$}
  edge[arrin] node[right] {} (n3)
    edge[arrin] node[right] {$1$} (n4);

\node[circ, minimum size=10mm, below=11mm of n5] (n6) {$\true$}
    edge[arrin] node[below] {} (n9)
    edge[arrin] node[left] {$0$} (n5);
  \node[circ, minimum size=10mm, below=11mm of n9] (n7) {$\false$}
    edge[arrin] node[right] {$0$} (n9)
    edge[arrin] node[below] {} (n5);

\node[below right= 1mm and -0.5mm of n3] (e39) {$0$};

\node[below left= 1mm and -0.5mm of n4] (e48) {$0$};

\node[below left= 1mm and -0.5mm of n9] (e9f) {$1$};

\node[below right= 1mm and -0.5mm of n5] (e5f) {$1$};
\end{tikzpicture}
\caption{An ordered binary
decision diagram.}
\label{fig-obdd}
\end{subfigure}
\end{center}
\caption{Two deterministic binary decision diagrams over the set of variables $\X = \{X,Y,Z,W\}$. \label{fig-fbdd-obdd}}
\end{figure}

\begin{example}
The \nbdd in Figure \ref{fig-nbdd} is not free: indeed, for the run $u_1, u_3, u_5, u_7$, the nodes $u_1$ and $u_5$ have the same label $Y$. On the other hand, the \nbdd shown in Figure \ref{fig-fbdd} is free and deterministic, so it is an \fbdd. 
Moreover, the \nbdd shown in Figure \ref{fig-obdd} is ordered and deterministic, because every run in it follows the linear order $X < Y < Z < W$, so it is an \obdd. The \nbdd of Figure~\ref{fig-fbdd}, on the other hand, is not ordered.
\end{example}

\subsection{Binary decision diagrams with or-nodes}
Nondeterministic binary decision
diagrams have also been defined by extending binary decision diagrams with {\em
or-nodes} \citep{ACMS20,DBSSWCH10} (they are also called {\em guessing
nodes}~\citep{razborov1991lower}).
An or-node $u$ in a \bdd is an internal node such that $u$ is labeled with the disjunction symbol $\vee$ instead of a variable, and the outgoing edges of $u$ are not labeled. Acceptance of a \bdd with or-nodes is defined as for the case of \bdds; in particular, if a run goes through an or-node, then it chooses an outgoing edge of this or-node, which does not impose any restriction on the values assigned to variables. In this sense, or-nodes can be used to encode nondeterministic choices as shown in the following example:
\begin{center}
\begin{tikzpicture}[scale=0.85, transform shape]
  \node[circ, minimum size=9mm] (n1) {$X$};
  \node[circ, minimum size=9mm, below=11mm of n1] (n3) {$W$}
    edge[arrin] node[left] {$1$} (n1);
  \node[circ, minimum size=9mm, left=15mm of n3] (n2) {$Y$}
    edge[arrin] node[above] {$0$} (n1);
  \node[circ, minimum size=9mm, right=15mm of n3] (n4) {$Z$}
    edge[arrin] node[right, above] {\,\,$0,1$} (n1);
  
\node[single arrow, fill=black, right=14mm of n4, minimum width = 10pt, single arrow head extend=5pt, minimum height=10mm] {}; 
      
  \node[circ, minimum size=9mm, right = 80mm of n1] (tn1) {$X$};
  \node[minimum size=9mm, below=11mm of tn1] (tn2) {};
  \node[circ, minimum size=9mm, left=2mm of tn2] (tn3) {$\vee$}
    edge[arrin] node[left] {$0$} (tn1);
  \node[circ, minimum size=9mm, right=2mm of tn2] (tn4) {$\vee$}
    edge[arrin] node[right] {$1$} (tn1);
  \node[minimum size=9mm, below=11mm of tn3] (tn6) {};
  \node[circ, minimum size=9mm, left=1mm of tn6] (tn8) {$Y$}
    edge[arrin] node[above] {} (tn3);   
  \node[minimum size=9mm, below=11mm of tn4] (tn7) {};
  \node[circ, minimum size=9mm, right=1mm of tn7] (tn10) {$W$}
    edge[arrin] node[above] {} (tn4);
  \node[circ, minimum size=9mm, below=31.1mm of tn1] (tn9) {$Z$}
    edge[arrin] node[above] {} (tn3)
    edge[arrin] node[above] {} (tn4); 

\end{tikzpicture}
\end{center}
Part of an \nbdd is shown in the left-hand side of this figure, while its representation as a \bdd with or-nodes is shown in the right-hand side. In particular, if the variable $X$ is assigned value $0$ in the \nbdd, then a run can move either to the node with label $Y$ or to the node with label $Z$. Such a choice is represented in the \bdd by connecting the outgoing $0$-edge of variable $X$ to an or-node, which in turn is connected (by means of edges without labels) to the nodes with labels $Y$ and $Z$; in this way, we indicate that if $X$ is assigned value $0$ in a run of the \bdd, then this run must move to an or-node, from which it must choose whether to move either to variable $Y$ or to variable $Z$. It is straightforward to see that this idea can be used to translate in polynomial time an \nbdd into an equivalent \bdd with or-nodes.

In the other direction, a \bdd with or-nodes can be translated in polynomial time 
into an equivalent \nbdd by applying the transformation shown in Figure~\ref{fig-or-to-nbdd}, which we explain next. Part of a \bdd with or-nodes is shown in the left-hand side of this figure, while its representation as an \nbdd is shown in the right-hand side.
More precisely, for a pair of nodes $u$, $v$ that are labeled by variables in a \bdd with or-nodes $\D$, an 
\emph{or-path} from $u$ to $v$ is a path~$\pi$ from $u$ to $v$ in~$\D$ such that every node in~$\pi$ except for $u$ and $v$ is an or-node. For example, the following are or-paths from the node $X$ to the node $Z$ in Figure~\ref{fig-or-to-nbdd}:
\begin{center}
\begin{tikzpicture}[scale=0.85, transform shape]
  \node[circ, minimum size=9mm] (n1) {$X$};
  \node[minimum size=9mm, below=9mm of n1] (aux) {};
  \node[circ, minimum size=9mm, left=3mm of aux] (n2) {$\vee$}
    edge[arrin] node[left] (e) {$0$} (n1);
  \node[circ, minimum size=9mm, below=9mm of aux] (n4) {$Z$}
    edge[arrin] node[right] {} (n2);

  \node[circ, right = 20mm of n1, minimum size=9mm] (sn1) {$X$};
  \node[minimum size=9mm, below=9mm of sn1] (aux) {};
  \node[circ, minimum size=9mm, right=3mm of aux] (sn3) {$\vee$}
    edge[arrin] node[right] {$1$} (sn1);
  \node[circ, minimum size=9mm, below=9mm of aux] (sn4) {$Z$}
    edge[arrin] node[right] {} (sn3);

\end{tikzpicture}
\end{center}
Notice that the first edge in an or-path must be labeled $0$ or $1$; in the
first case, the or-path is said to be a \emph{$0$-or-path}, while in the second is said
to be a \emph{$1$-or-path}. Moreover, given  a node $u$ that is labeled by a variable,
the \emph{or-closure} of $u$ is defined as the set of nodes $v$ such that $v$ is
labeled by a variable and there exists an or-path from $u$ to $v$. In the \bdd
with or-nodes in Figure~\ref{fig-or-to-nbdd}, the or-closure of $X$ consists of
the nodes with labels $Y$, $Z$, $W$ and $V$. Then in the translation of a \bdd
with or-nodes into an \nbdd, for every node $v$ in the or-closure of a node $u$,
a $0$-edge from $u$ to $v$ is included in the \nbdd if there exists a
$0$-or-path from $u$ to $v$, and a $1$-edge from $u$ to $v$ is included in the
\nbdd if there exists a $1$-or-path from $u$ to $v$. An example of such a
transformation is shown in Figure~\ref{fig-or-to-nbdd}. It is straightforward to
see
that this idea can be used to translate in polynomial time any  \bdd with
or-nodes~$\D$ into an equivalent \nbdd $\D'$.\footnote{Note that, if the root of~$\D$ is an or-node, then $\D'$ will have multiple
sources.}
Observe that the complexity of that operation is essentially that of computing the transitive closure 
of the binary decision diagram, which can be done in polynomial time.
Furthermore, there is no
need to compute this transitive closure
if no or-node is connected to another or-node: if we impose this condition on the input, then the translation becomes linear-time.

\begin{figure}
\begin{center}
\begin{tikzpicture}[scale=0.85, transform shape]
  \node[circ, minimum size=9mm] (n1) {$X$};
  \node[minimum size=9mm, below=9mm of n1] (aux) {};
  \node[circ, minimum size=9mm, left=3mm of aux] (n2) {$\vee$}
    edge[arrin] node[left] (e) {$0$} (n1);
  \node[circ, minimum size=9mm, right=3mm of aux] (n3) {$\vee$}
    edge[arrin] node[right] {$1$} (n1);
  \node[circ, minimum size=9mm, below=9mm of aux] (n4) {$Z$}
    edge[arrin] node[right] {} (n2)
    edge[arrin] node[right] {} (n3);
  \node[circ, minimum size=9mm, left=16mm of n4] (n5) {$Y$}
    edge[arrin] node[right] {} (n2);
  \node[circ, minimum size=9mm, right=16mm of n4] (n6) {$\vee$}
    edge[arrin] node[right] {} (n3);
  \node[minimum size=9mm, below=9mm of n6] (aux6) {};
  \node[circ, minimum size=9mm, right=4mm of aux6] (n7) {$V$}
    edge[arrin] node[right] {} (n6);
  \node[circ, minimum size=9mm, left=4mm of aux6] (n8) {$W$}
    edge[bend left = 12mm, arrin] node[right] {} (n2)
    edge[arrin] node[right] {} (n6);

\node[single arrow, fill=black, right=57mm of e, minimum width = 10pt, single arrow head extend=5pt, minimum height=10mm] {}; 

\node[circ, minimum size=9mm, right = 100mm of n1] (tn1) {$X$};
\node[minimum size=9mm, below=9mm of tn1] (aux) {};
\node[circ, minimum size=9mm, left=-2mm of aux] (tn2) {$Z$}
    edge[arrin] node[left] {$0,1$} (tn1);
\node[circ, minimum size=9mm, left=20mm of aux] (tn3) {$Y$}
    edge[arrin] node[above left] {$0$} (tn1);
\node[circ, minimum size=9mm, right=-2mm of aux] (tn4) {$W$}
    edge[arrin] node[right] {$0,1$} (tn1);
\node[circ, minimum size=9mm, right=20mm of aux] (tn5) {$V$}
    edge[arrin] node[above right] {$1$} (tn1);

\end{tikzpicture}
\end{center}
\caption{Transformation of a \bdd with or-nodes into an \nbdd.\label{fig-or-to-nbdd}}
\end{figure}
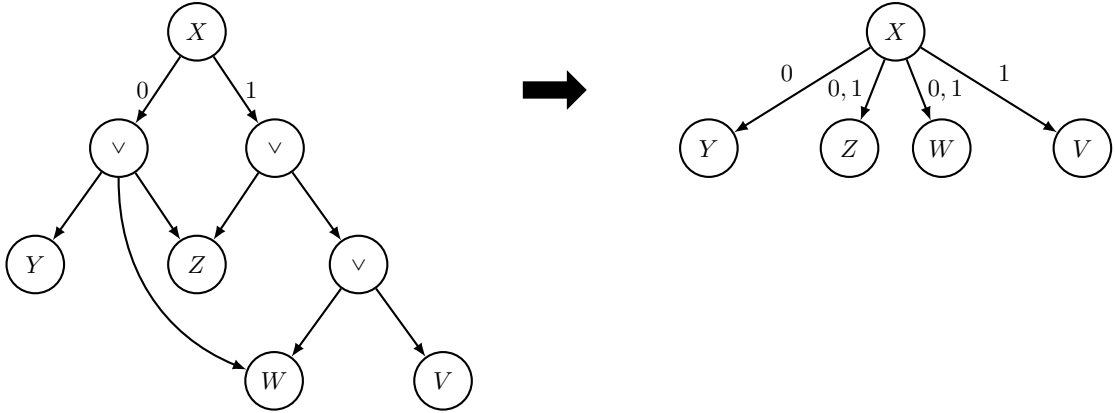

The different conditions on variable structuredness and ambiguity level for \nbdds can be directly extended to \bdds with or-nodes. Thus, for example, we can talk about \fbdds with or-nodes and \obdds with or nodes. The procedures to transform \nbdds into \bdds with or-nodes and 
vice-versa preserve all such conditions. In this sense, 
up to the translation that we presented, 
the models of \nbdds and \bdds with or-nodes are completely interchangeable:
we use \nbdds in the sequel.
We last note that some restricted cases of \bdds with or-nodes have been considered in the literature, e.g., \pbdds, which are (nondeterministic) disjunctions of \obdds with different orders~\citep{BW97}. Again, these models can be represented within the framework based on \nbdds that is used in this paper.

\subsection{Completion and zero-suppressed semantics}

We present in this section the \emph{completion} transformation on \bdds, and the alternative semantics of \bdds called the \emph{zero-suppressed semantics}.

\paragraph*{Completion.}
The notion of \emph{complete} \nbdds has been studied, e.g., in
\citep{bollig2016minimization}.
An \nbdd is \emph{complete} if all variables are tested on all
runs. More precisely, for each run $\pi = u_1, u_2, \ldots u_k$, the set of
variables that occur as labels of $\{u_1, u_2, \ldots, u_k\}$ is equal to the
set $\X$ of all variables. In particular, if the binary decision diagram is
free, then every such path must consist of exactly $|\X|$ internal nodes
followed by a sink. Further, if the binary decision diagram is ordered, the
sequence of variables tested by every run is exactly the linear order on the
variables that the binary decision diagram follows.

Completeness can be useful for certain tasks. Consider for instance the counting
problem for \fbdds: the input is a \fbdd $\mathcal{D}$ over a set of
variables $\X$, and the output is $\ssat(\mathcal{D})$, i.e., the number of assignments of $\assign(\X)$
that satisfy $\mathcal{D}$. If $\mathcal{D}$ is complete, then the following linear-time bottom-up algorithm is
correct. Annotate all $\true$-sinks with $1$ and all $0$-sinks with $0$. Then for an internal node $n$,
annotate it by the sum of the annotations of the nodes that can be reached from~$n$ by following only one edge,
then output the sum of the annotations of all source nodes. If $\mathcal{D}$ was not complete, this would not
return the correct result, as this might have missed some satisfying assignments.

We can always complete an \nbdd in polynomial time by adding extra internal nodes before sinks. More precisely, assuming $\X = \{X_1, \ldots, X_k\}$, we replace each sink $u$ by a sequence of nodes $u_1, \ldots, u_k, u_{k+1}$ where the label of $u_i$ is $X_i$ for every $i \in [1,k]$, the label of $u_{k+1}$ is the same as the label of $u$, and there is a $0$-edge and a $1$-edge from $u_i$ to $u_{i+1}$ for every $i \in [1,k]$. Notice that this translation does not affect unambiguity or determinism; however, the resulting \nbdd is in general not free or ordered. The complexity of the procedure is $O(|\D| \cdot |\X|)$, where $|\D|$ is the size of the input \nbdd $\D$ and $\X$ is the set of variables. 

More interestingly, we can always complete an \nfbdd in polynomial time while ensuring that the result is still an \nfbdd, as shown in~\citep{wegener2000branching} (see also~\citep{ACMS20}). 
To do this, we rewrite the \nfbdd from the sources
to the sinks in polynomial time while ensuring that, for every node $u$, all paths from a source 
to~$u$ test the same set $S_u$ of variables, and every variable in $S_u$ is tested exactly once in each such path. The base case is that of a source 
$u$ labeled with variable $X$ which tests the set $S_u = \{X\}$. Inductively, considering an internal node $u$ labeled with~$X$ with incoming edges from nodes $u_1, \ldots, u_\ell$ testing sets $S_{u_1}, \ldots, S_{u_\ell}$, and letting $S = \bigcup_i S_{u_i}$, we replace each edge from $u_i$ to $u$ by a sequence of nodes like in the previous paragraph which test the variables of the (possibly empty) set $S \smallsetminus S_i$. By inductive assumption, all paths reaching the end of such a sequence are testing precisely $S$. As the binary decision diagram is an \nfbdd, we know that the label $X$ of node $u$ does not belong to $S_i$ for every~$i \in [1, \ell]$. Hence, $x \notin S$ and all paths reaching node $u$ after the rewriting are indeed testing all variables of $S_u = S \cup \{X\}$ precisely once. Last, for sinks, we proceed as in the previous paragraph to make sure that any remaining variables are tested. Note that this translation again does not affect unambiguity or determinism, and the result is complete and is still free, and again the complexity is $O(|\D| \cdot |\X|)$ for $\D$ the input \nfbdd and $\X$ its set of variables.

We last observe that this process on \nfbdds, when applied to \nobdds, can be performed in a way that gives an \nobdd as a result, if the sequences of variables that we insert always follow the order~$<$ of the \nobdd. For this we can observe that the sets $S_n$ of the previous proofs are always a prefix of the order~$<$. Again the transformation does not affect unambiguity or determinism, and the complexity is the same. In this specific case, a simpler description of this algorithm is then the following.
Assume that the \nobdd conforms to the linear order $X_1 < X_2 < \ldots < X_k$. In order to complete it, we can replace each edge $X_i \xrightarrow{a} X_j$ by the sequence of edges
$X_i \xrightarrow{a} X_{i+1} \xrightarrow{0,1} X_{i+2} \xrightarrow{0,1} \cdots \xrightarrow{0,1} X_j$, where the intermediate steps represent fresh nodes that test the indicated variables.

\paragraph*{Zero-suppressed semantics.}
An alternative way to define the semantics of an \nbdd is via the so-called \emph{zero-suppressed semantics}, see \citep[Section 8.1]{wegener2000branching} or~\citep{minato1993zero}. Intuitively, in a zero-suppressed \nbdd, it is assumed that every variable not mentioned along a run %
takes the value $0$, unlike the standard semantics presented so far where the value of such variables is unconstrained.
More formally, given a set of variables $\X$, a zero-suppressed \nbdd $\D$ over $\X$ is defined like an \nbdd, but with the following modification in the definition of the acceptance condition for~$\D$. Let $\as \in \assign(\X)$ and $\pi = u_1, u_2, \ldots u_k$ be a 
path from a source to a sink of
$\D$, and let $X_i$ be the label of $u_i$ for each $i \in [1, k-1]$. Then $\pi$ is said to be \emph{a run consistent with $\as$} if there exists an edge from $u_i$ to $u_{i+1}$ with label $\as(X_i)$ for each $i \in[1, k-1]$, and $\as(X) = 0$ for each variable $X \in \X \smallsetminus \{X_1, \ldots, X_{k-1}\}$.

The purpose of zero-suppressed semantics is that, in practice, it can yield smaller binary decision diagrams. For example, a Boolean function over a set of variables $\X$ with the only satisfying assignment being the one that assigns $0$ to all variables can be represented in the zero-suppressed semantics simply as a single source node that is also a $\true$-sink. In contrast, using the usual semantics requires $|X|$ nodes.
However, as we explain next, there are simple polynomial-time conversions between
the two semantics.

For an \nbdd that is complete, 
given that all runs must test every variable, there is no difference between the standard semantics and the zero-suppressed semantics. For this reason, 
every \nbdd can be converted in polynomial time to an equivalent zero-suppressed \nbdd, which preserves the variable structuredness and ambiguity level of the \nbdd being translated. 
Conversely, given a zero-suppressed \nbdd $\D$, it can be transformed into an equivalent \nbdd by considering a variant of the completion procedure presented before: in that procedure, we only add nodes that check that the missing variables are mapped to $0$.
This procedure makes $\D$ complete without changing the function that it represents in the zero-suppressed semantics. 
In addition,
it preserves the variable structuredness and ambiguity level of the zero-suppressed \nbdd being translated.

\subsection{Decision trees}
If we disallow sharing (i.e., we disallow internal nodes having multiple incoming edges), we get decision trees. More precisely,
an \nbdd $\D$ is a \emph{nondeterministic decision forest} (\ndf) 
if, ignoring the sinks, the underlying graph of $\D$ is a tree.
If, in addition, it has only one source node, then it is a \emph{nondeterministic decision tree} (\ndt). 
This notion of \emph{nondeterministic decision trees} has already been studied, e.g., in~\citep[Theorem~10.1.4]{wegener2000branching}.
For example, the \nbdd in Figure \ref{fig-fbdd} is an \ndt, which is usually depicted in the following way, implicitly indicating that leaves are not considered in the underlying structure that defines an \ndt:

\begin{center}
\begin{tikzpicture}[scale=0.85, transform shape]
  \node[circ, minimum size=10mm] (n0) {$X$};
\node[minimum size=10mm, below=11mm of n0] (a0) {};
  \node[circ, minimum size=10mm, left=25mm of a0] (n3) {$Y$}
      edge[arrin] node[above left] {$0$} (n0);
  \node[circ, minimum size=10mm, right=25mm of a0] (n4) {$Z$}
    edge[arrin] node[above right] {$1$} (n0);
  \node[minimum size=10mm, below=11mm of n3] (a1) {};
  \node[circ, minimum size=10mm, left=5mm of a1] (n5) {$Z$}
    edge[arrin] node[left] {$0$} (n3);
\node[minimum size=10mm, below=11mm of n4] (a4) {};
  \node[circ, minimum size=10mm, right=5mm of a4] (n9) {$Y$}
    edge[arrin] node[right] {$1$} (n4);

\node[minimum size=10mm, below=11mm of n5] (a2) {};
\node[minimum size=10mm, below=11mm of n9] (a3) {};

\node[circ, minimum size=10mm, left=0mm of a2] (n6) {\strue}
    edge[arrin] node[left] {$0$} (n5);
  \node[circ, minimum size=10mm, right=0mm of a2] (n7) {\sfalse}
    edge[arrin] node[right] {$1$} (n5);
  \node[circ, minimum size=10mm, right=15mm of n7] (n7) {\sfalse}
    edge[arrin] node[right] {$1$} (n3);

\node[circ, minimum size=10mm, left=0mm of a3] (nn1) {\strue}
    edge[arrin] node[left] {$1$} (n9);
\node[circ, minimum size=10mm, left=14mm of nn1] (nn2) {\strue}
    edge[arrin] node[left] {$0$} (n4);
\node[circ, minimum size=10mm, right=0mm of a3] (nn3) {\sfalse}
        edge[arrin] node[right] {$0$} (n9);

\end{tikzpicture}
\end{center}
On the other hand, the \nbdd in Figure \ref{fig-obdd} is not an \ndt. 

As a special case of \nbdds, \ndfs and \ndts inherit property definitions such as variable structuredness and ambiguity levels. As with binary decision diagrams, the most widely used subclass are the \ndts that are deterministic, which are usually referred to simply as \emph{decision trees} (\ddt) \citep{SL91,WKQGYMMNLYZSHS08}. 
For example, the \nbdd in Figure \ref{fig-fbdd} is a \ddt, which can be easily verified considering the previous figure.

It is important to mention that \ndfs and \ddts are usually assumed to be free because, unlike general binary decision diagrams, an \ndf can always be transformed 
in polynomial time
into an equivalent free \ndf (and likewise for \ddts). More precisely, for every node $u$ of an \ndf $\D$, there exists a unique path from the corresponding source of $\D$ to $u$. The labels of the vertices and edges traversed on this path specify a partial assignment for a set of variables.
If the node $u$ tests a variable $X$ that was already tested on this path, then the value $a$ of~$X$ is already specified in the partial assignment. Thus, we can remove~$u$, re-connect the outgoing edges from $u$ labeled by~$a$ to the predecessor of~$u$ in the path, and remove the other outgoing edges from $u$ (which are labeled $1 - a$). Repeating this process in a traversal of $\D$ makes this \ndf free. Observe that this process does not work with general binary decision diagrams as there may be different paths reaching a node $u$ with conflicting values for the variable tested by~$u$.

\section{Boolean Circuits} \label{sec:circuits}
We now move from binary decision diagrams to the more general formalism of
Boolean circuits. The section is structured similarly to the previous section:
we give the basic definitions of a Boolean circuit and its semantics, and define notions
of structuredness and ambiguity levels. We then mention the definition of
Sentential Decision Diagrams (SDDs) as a special case of Boolean circuits. We then
mention the case of Boolean circuits without sharing (aka formulas), and the notion of
smoothness for Boolean circuits (corresponding to completeness for binary
decision diagrams).

\subsection{Definitions}

A \emph{circuit $C$ over variables $\X$} is a rooted $\{\land,\lor,\lnot,\true,\false\}\cup
\X$-node labeled DAG $C = (N, W, \lambda)$ together with a designated sink
$g_0\in N$ called the \emph{output gate of $C$}. The vertices $N$ are called
\emph{gates}, the edges $W$ are called \emph{wires}. We often abuse notation
and write $g\in C$ to mean $g\in N$. An \emph{input} of a gate $g \in C$ is a
gate $g' \in C$ that has a wire to~$g$, i.e., we have $(g', g) \in W$. The
gates of~$C$ can be of several kinds:
\begin{itemize}
  \item If $\lambda(g)= \true$, then $g$ is a \emph{constant true-gate} (also sometimes called \emph{constant $1$-gate}), and it must then have no input;
\item If $\lambda(g)= \false$, then $g$ is a \emph{constant false-gate} (also sometimes called \emph{constant $0$-gate}), and it must then have no input;
\item If $\lambda(g)\in \X$, then $g$ is a \emph{variable gate}, and it must then have no input;
\item If $\lambda(g) = \lnot$, then $g$ is a \emph{negation gate}, and it must then have exactly one input;
\item If $\lambda(g) = \lor$ (resp, $\lambda(g) = \land$), then $g$ is a \emph{$\land$-gate} (resp., \emph{$\lor$-gate}), and it must then have at least one input.
\end{itemize}

The Boolean circuit $C$ represents a Boolean function over $\X$ in
the following way. Given an assignment $\ba\colon \X \to \{0, 1\}$, we extend it
to give a Boolean value to all gates of the Boolean circuit by bottom-up induction:
\begin{itemize}
  \item The value of a constant true-gate $g$ is $\ba(g) = 1$;
  \item The value of a constant false-gate $g$ is $\ba(g) = 0$;
    \item The value of a variable gate $g$ annotated with variable $X$ is $\ba(g) \colonequals \ba(X)$;
    \item The value of a negation gate $g$ is $\ba(g) \colonequals 1 - \ba(g')$, where $g'$ is the input of~$g$;
    \item The value of an $\land$-gate $g$ (resp., $\lor$-gate $g$) with input gates $g_1, \ldots, g_n$ is $\ba(g) \colonequals \bigwedge_i \ba(g_i)$ (resp., $\ba(g) \colonequals \bigvee_i \ba(g_i)$).
\end{itemize}
The Boolean function defined by $C$ is then the one that maps assignments $\ba$
to the value $\ba(g_0)$ of the output gate of the Boolean circuit. We sometimes abuse
terminology and call the variable gates the \emph{inputs} of the Boolean circuit.

The size~$|C|$ of a Boolean circuit is its number of wires.  For a gate~$g$ of~$C$, we
denote by~$\vars(g)$ the set of variables that have a directed path to~$g$
in~$C$, and we denote by~$C_g$ the Boolean circuit over~$\X$ whose output gate
is~$g$. 

In the rest of this document, we will always consider Boolean circuits in
\emph{negation normal form} (NNF), where negation gates are always applied to
variables; formally, for any negation gate $g$, then its one input must be a
variable gate. We can equivalently see NNF Boolean circuits as positive Boolean circuits
(circuits without negation) defined directly on the literals (i.e., the
variables and their negations).\\

Just like binary decision diagrams, we classify Boolean circuits according to two
dimensions.

\begin{itemize}
    \item \textbf{Variable structuredness.} A Boolean circuit is called
      \emph{decomposable} if, intuitively, $\land$-gates partition the
      variables into disjoint sets. Formally, an $\land$-gate~$g$ of~$C$ is
      \emph{decomposable} if it has exactly two input gates $g_1\neq g_2$ such
      that we have~$\vars(g_1) \cap \vars(g_2) = \emptyset$. A Boolean circuit $C$ is
      \emph{decomposable} if every $\land$-gate of~$C$ is. Note that this is a
      syntactic condition that can easily be checked in time $O(|C|\cdot |\X|)$. A
      decomposable NNF Boolean circuit is called a \emph{DNNF}. We point out that DNNFs
      are sometimes defined without the restriction that $\land$-nodes always
      have two inputs; we impose this for convenience, and this is without much
      loss of generality as this can be enforced in linear time (using constant gates).
    
    Further, a DNNF is \emph{structured} if the partitions that are defined by the
    $\land$-gates are compatible, in the following sense. A \emph{v-tree} over
    the set of variables $\X$ is a rooted full binary tree $T$ whose leaves are
    in bijection with $\X$.  We always identify each leaf with the associated
    element of~$\X$. For a node~$n\in T$, we abuse notation and denote
    by~$\vars(n)$ the set of variables in the subtree rooted at~$n$. A DNNF $D$
    is \emph{structured by the v-tree $T$} if there exists a mapping $\rho$
    labeling each $\land$-gate of $g$ with a node of $T$ that satisfies the
    following: for every $\land$-gate $g$ of $D$ with two inputs $g_1, g_2$,
    the node $\rho(g)$ \emph{structures} $g$, i.e., $\rho(g)$ is not a leaf and, letting  $n_1$ and $n_2$ be its two children in some order, we have
    $\vars(g_i) \subseteq
    \vars(T_{n_i} )$ for $i=1,2$. 
    A DNNF is \emph{structured}, written
    \emph{SDNNF}, if there exists a v-tree that structures it.

\item \textbf{Ambiguity level.} A Boolean circuit is \emph{unambiguous}, which is (unfortunately) called for historical reasons \emph{deterministic}, if, intuitively, the inputs to $\lor$-gates are mutually exclusive. Formally,
an~$\lor$-gate~$g$ of~$C$ is \emph{deterministic} if
for every pair~$g_1\neq g_2$ of input gates of~$g$, the Boolean functions
over~$\X$ captured by~$C_{g_1}$ and~$C_{g_2}$ are disjoint; that is, we
have~$\sat(C_{g_1}) \cap \sat(C_{g_2}) = \emptyset$.  We call~$C$
\emph{deterministic} if each~$\lor$-gate is. Note that being deterministic is a semantic condition, not a syntactic one.
    
A stronger notion of determinism is that of determinism in the sense of BDDs,
    which we call \emph{decision} to distinguish it from determinism. An
    $\lor$-gate is \emph{decision} if it has exactly two inputs~$g_0$ and $g_1$
    and there is a variable $X$ such that $g_1$ is an $\land$-gate with $X$ as
    input, and~$g_0$ is an $\land$-gate with a negation gate of $X$ as input.
    The Boolean circuit $C$ is \emph{decision} if all $\lor$-gates are decision. Note
    that being decision is a syntactic condition that can be checked in linear time, and a decision circuit is
    always deterministic.
\end{itemize}

\begin{table}
\caption{Classification of NNF Boolean circuits based on ambiguity level (nondeterministic,
  unambiguous, or deterministic) and variable structuredness (unrestricted,
  free, or ordered).\label{tab-circ}}
\begin{center}
\begin{tabular}{lccc}\toprule
  & \bfseries Unrestricted & \bfseries Decomposable & \bfseries Structured\\\midrule
 \bfseries Arbitrary & NNF & DNNF & SDNNF\\
  {\bfseries Deterministic} (i.e., unambiguous) & d-NNF & d-DNNF &
  d-SDNNF\\
 \bfseries Decision & dec-NNF & dec-DNNF & dec-SDNNF\\\bottomrule
\end{tabular}
\end{center}
\end{table}

The combination of the previous two dimensions again gives rise to 9 different
classes of Boolean circuits, which are shown in Table~\ref{tab-circ}.
A Boolean circuit which is both decomposable and deterministic is
called a \emph{d-DNNF}, and if it is in addition structured then we have a
\emph{d-SDNNF}. A Boolean circuit which is both decomposable and decision is called a
\emph{dec-DNNF}, and \emph{dec-SDNNF} if it is structured.
Note that dec-DNNFs are also called
\emph{decision-DNNFs}~\citep{lagniez2017improved};
they can also be seen as \emph{AND-FBDDs} with decomposable ANDs~\citep{beame2013lower}.

Last, we point out that ambiguity levels are typically only considered in
combination with decomposability, i.e., there is no standard notation for a
deterministic NNF such as \emph{d-NNF} (or \emph{dec-NNF}). This is because, to
the best of our knowledge, there is no interesting task that can be tractably
solved specifically on such Boolean circuits.

\begin{example} \label{ex:class}
The following example is taken from~\citep{arenas2023complexity}. We want to
  classify papers submitted to a conference as rejected (Boolean value $0$) or
  accepted (Boolean value $1$). Papers are described by Boolean variables  
  \feat{fg}, \feat{dtr}, \feat{nf} and \feat{na}, which stand for ``follows
  guidelines", ``deep theoretical result", ``new framework" and ``nice
  applications", respectively.  The Boolean classifier for the papers is given
  by the Boolean circuit in Figure~\ref{fig:ddbc-exa}. The input of this
  Boolean circuit are the variables \feat{fg}, \feat{dtr}, \feat{nf} and \feat{na}, each
  of which can take value either $0$ or $1$, depending on whether the variable  
  is present~($1$) or absent~($0$). The nodes with labels~$\neg$, $\lor$
  or~$\land$ are logic gates, and the associated Boolean value of each one of
  them depends on the logical connective represented by its label and the
  Boolean values of its inputs. The output value of the Boolean circuit is given by the
  top node in the figure.

The Boolean circuit in Figure~\ref{fig:ddbc-exa} is decomposable, because each
  $\land$-gate has two inputs, and the sets of variables of its inputs are
  pairwise disjoint. For instance, in the case of the top node in
  Figure~\ref{fig:ddbc-exa}, the left-hand side input has $\{\feat{fg}\}$ as
  its set of variables, while its right-hand side input has $\{\feat{dtr},
  \feat{nf}, \feat{na}\}$ as its set of variables, which are disjoint. Also,
  this Boolean circuit is deterministic as for every $\lor$-gate two of its
  inputs cannot be given value 1 by the same Boolean assignment for the
  variables. For instance, in the case of the only $\lor$-gate in
  Figure~\ref{fig:ddbc-exa}, if a Boolean assignment for the variables gives
  value 1 to its left-hand side input, then variable \feat{dtr} has to be given
  value 1 and, thus, such an assignment gives value $0$ to the right-hand side
  input of the~$\lor$-gate. In the same way, it can be seen that if a Boolean
  assignment for the variables gives value 1 to the right-hand side input of
  this $\lor$-gate, then it gives value $0$ to its left-hand side input.

  Last note that the Boolean circuit is structured by the vtree shown at the left. Further, the circuit is not decision; however, it could be made decision by adding a $\land$-gate having as inputs the variable gate \feat{dtr} and a constant $1$-gate, and making that gate an input to the $\lor$-gate instead of \feat{dtr}.
\end{example}

\begin{figure}
\begin{subfigure}{0.4\textwidth}
\centering
    \begin{tikzpicture}
        \node[inner sep=0] (v1) at (0, 0) {};
        \node[inner sep=0] (v2) at (1, -1) {};
        \node[inner sep=0] (v3) at (2, -2) {};
        \node (v4) at (-1, -1) {\feat{fg}};
        \node (v5) at (0, -2) {\feat{dtr}};
        \node (v6) at (1, -3) {\feat{nf}};
        \node (v7) at (3, -3) {\feat{na}};
        \draw (v4) -- (v1);
        \draw (v2) -- (v1);
        \draw (v5) -- (v2);
        \draw (v3) -- (v2);
        \draw (v6) -- (v3);
        \draw (v7) -- (v3);
    \end{tikzpicture}
    \caption{Vtree}
\end{subfigure}
\begin{subfigure}{0.58\textwidth}
\centering
\begin{tikzpicture}
  \node[circ, minimum size=7mm, inner  sep=-2] (n1) {\feat{dtr}};
  \node[circ, right=12mm of n1, minimum size=7mm] (n2) {\feat{nf}};
  \node[circ, above=1.3mm of n2, minimum size=7mm] (nneg) {$\neg$}
    edge[arrin] (n1);
  \node[circ, right=12mm of n2, minimum size=7mm] (n3) {\feat{na}};
  \node[circ, right=6mm of nneg, minimum size=7mm] (nbla) {$\land$}
  edge[arrin] (n2)
  edge[arrin] (n3);
  \node[circ, above=10mm of n3, minimum size=7mm] (n4) {$\land$}
  edge[arrin] (nneg)
  edge[arrin] (nbla);
  \node[circ, above=18mm of n2, minimum size=7mm] (n5) {$\lor$}
  edge[arrin] (n4)
  edge[arrin] (n1);
  \node[circ, above=27mm of n1, minimum size=7mm] (n6) {$\land$}
  edge[arrin] (n5);
  \node[circw, left=12mm of n5, minimum size=7mm] (n5a) {};
  \node[circ, left=12mm of n5a, minimum size=7mm, inner sep=-2] (n0) {\feat{fg}}
  edge[arrout] (n6);
\end{tikzpicture}
    \caption{Boolean circuit}
\end{subfigure}
\caption{A deterministic and decomposable Boolean circuit as a classifier and its vtree. \label{fig:ddbc-exa}}
\end{figure}
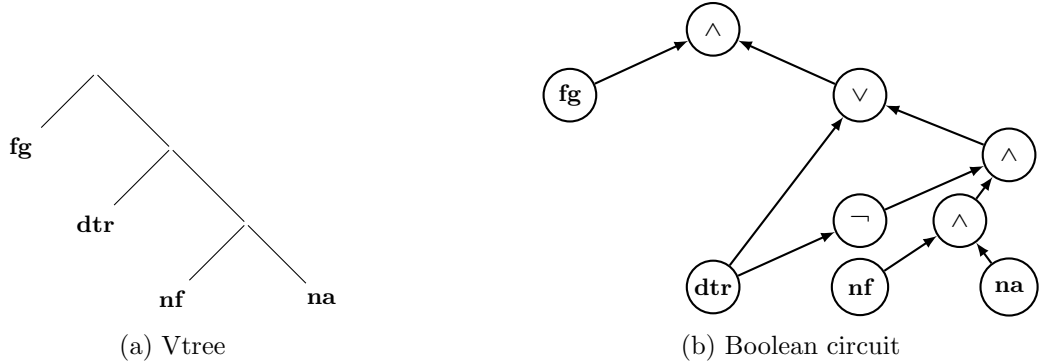

\subsection{Sentential decision diagrams}

\emph{Sentential decision diagrams} (SDDs) are a restricted class of structured
Boolean circuits, satisfying a new ambiguity level called \emph{strong determinism}. An $\lor$-gate $g$ is said to be strongly deterministic if it satisfies the following criteria. Firstly, the inputs of $g$ are all $\land$-gates, written $g_1, ..., g_m$, and they are all structured by some v-tree node $n$, i.e., we have $\rho(g_i) = n$ for all $1 \leq i \leq m$. Further, each $g_i$ for $1 \leq i \leq m$ has precisely two inputs $p_i$ and $s_i$, and, letting $n_1, n_2$ be the children of~$n$ in the order given by the v-tree, we have $\vars(p_i) \subseteq \vars(n_1)$ and $\vars(s_i) \subseteq \vars(n_2)$. Then we call $p_1, ..., p_m$ the \emph{primes} and $s_1, ..., s_m$ the \emph{subs} of $g$, and we require that the primes $p_i, ..., p_m$ are mutually exclusive, namely, we require that $p_i \land p_j$ is unsatisfiable for any $i\neq j$.

A Boolean circuit~$C$ is then \emph{strongly deterministic} if every $\lor$-gate of~$C$ is. This gives rise to the class of
\textit{strongly deterministic SDNNF}~\citep{pipatsrisawat2010lower}, which, as
the name suggests, satisfies a stronger notion of determinism than a d-SDNNF
while being more general than dec-SDNNF. A \textit{sentential decision diagram
(SDD)}~\citep{darwiche2011sdd} is a
strongly deterministic DNNF
which further ensures that, for every $\lor$-gate $g$, letting $p_1, \ldots, p_m$ be its primes, then they are \textit{exhaustive}, formally, $\bigvee_i p_i = \top$. 

\begin{example}
    In Figure \ref{fig:strong-det}, we show an example of a sentential decision
    diagram from \citep{darwiche2011sdd}, in circuit notation. The inputs of
    this Boolean circuit are the variables $A, B, C, D$. It can be seen that structuredness holds with respect to the shown vtree; for example, the $\land$-node immediately below the root $\lor$-node has one child with variables $\{B, A\} \subseteq \{B, A\}$ and one child with variables $\{C\} \subseteq \{D, C\}$. Strong determinism also holds; for example, for the root $\lor$-node, its children are all $\land$-nodes structured by the root vtree node, and the primes correspond to the logical formulae $B \land A$, $B \land \neg A$ and $\neg B$, which are clearly mutually exclusive (and also exhaustive).
\end{example}

\begin{figure}
\begin{subfigure}{0.2\textwidth}
\centering
    \begin{tikzpicture}
        \node[inner sep=0] (v1) {};
        \node[inner sep=0,below=1.5cm of v1, xshift=-1cm] (v2) {};
        \node[inner sep=0,below=1.5cm of v1, xshift=1cm] (v3) {};
        \node[below=1.5cm of v2, xshift=-0.5cm] (v4) {$B$};
        \node[below=1.5cm of v2, xshift=0.5cm] (v5) {$A$};
        \node[below=1.5cm of v3, xshift=-0.5cm] (v6) {$D$};
        \node[below=1.5cm of v3, xshift=0.5cm] (v7) {$C$};
        \draw (v1) -- (v2);
        \draw (v1) -- (v3);
        \draw (v2) -- (v4);
        \draw (v2) -- (v5);
        \draw (v3) -- (v6);
        \draw (v3) -- (v7);
    \end{tikzpicture}
    \caption{Vtree}
\end{subfigure}
\begin{subfigure}{0.78\textwidth}
\begin{center}
\begin{center}
\begin{tikzpicture}
    \node[circ, minimum size=7mm, inner sep=-2] (n1) {$\lor$};
    \node[circ, below=3mm of n1, xshift=-36mm, minimum size=7mm] (n2) {$\land$} edge[arrout] (n1);
    \node[circ, below=3mm of n1, xshift=36mm, minimum size=7mm] (n4) {$\land$} edge[arrout] (n1);
    \node[circ, below=3mm of n1, minimum size=7mm] (n3) {$\land$} edge[arrout] (n1);
    \node[circ, below=3mm of n2, xshift=-8mm] (n5) {$\lor$} edge[arrout] (n2);
    \node[circ, below=3mm of n2, xshift=8mm, inner sep=3] (n6) {$\top$} edge[arrout] (n2);
    \node[circ, below=3mm of n3, xshift=-8mm] (n7) {$\lor$} edge[arrout] (n3);
    \node[circ, below=3mm of n4, xshift=8mm] (n10) {$\lor$} edge[arrout] (n4);
    \node[circ, below=3mm of n5, xshift=-12mm] (n12) {$\land$} edge[arrout] (n5);
    \node[circ, below=3mm of n5, xshift=12mm] (n13) {$\land$} edge[arrout] (n5) edge[arrout] (n7);
    \node[circ, below=3mm of n5, xshift=36mm] (n14) {$\land$} edge[arrout] (n7);
    \node[circ, below=3mm of n10, xshift=-8mm] (n15) {$\land$} edge[arrout] (n10);
    \node[circ, below=3mm of n10, xshift=8mm] (n16) {$\land$} edge[arrout] (n10);

    \node[circ, below=18mm of n12, xshift=-8mm, inner sep=3] (n17) {$B$} edge[arrout] (n12) edge[arrout] (n14);
    \node[circ, below=18mm of n12, xshift=8mm] (n18) {$\neg$} edge[arrout] (n4) edge[arrout] (n13);
    \node[circ, below=3mm of n18] (n19) {$B$} edge[arrout] (n18);
    \node[circ, below=18mm of n12, xshift=40mm] (n21) {$\neg$} edge[arrout] (n14);
    \node[circ, below=3mm of n21] (n22) {$A$} edge[arrout] (n21);
    \node[circ, below=18mm of n12, xshift=24mm, inner sep=3] (n23) {$A$} edge[arrout] (n12);

    \node[circ, below=18mm of n12, xshift=72mm] (n24) {$\neg$} edge[arrout] (n16);
    \node[circ, below=3mm of n24] (n25) {$D$} edge[arrout] (n24);
    \node[circ, below=18mm of n12, xshift=56mm, inner sep=3] (n26) {$D$} edge[arrout] (n15);

    \node[circ, below=18mm of n12, xshift=90mm] (n27) {$C$} edge[arrout] (n3) edge[arrout] (n15);
    \node[circ, below=18mm of n12, xshift=106mm] (n27) {$\false$} edge[arrout] (n13) edge[arrout] (n16);

\end{tikzpicture}
\end{center}
\caption{Sentential decision diagram \label{fig:strong-det}}
\end{center}
\end{subfigure}
\caption{An example of a sentential decision diagram (SDD) and its vtree.}
\end{figure}
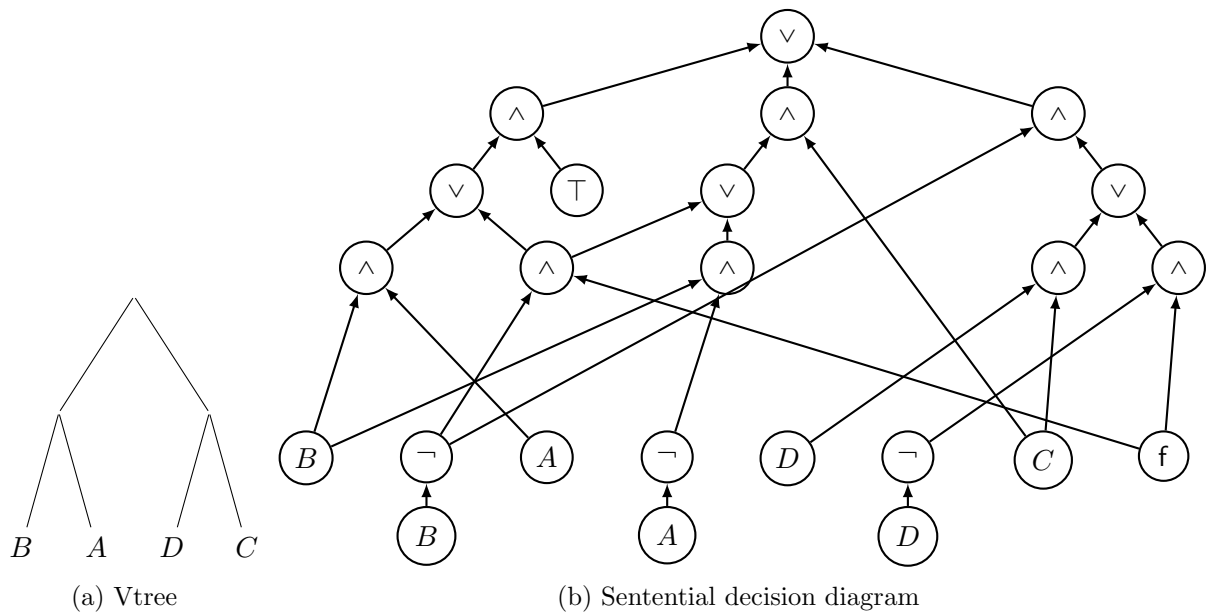

\subsection{Formulas}

As for binary decision diagrams and their restriction to decision trees, one can also
be interested in circuit classes in which we disallow sharing.

When the underlying graph of a Boolean circuit is a tree (noting that we may have
multiple variable gates labeled by the same variable), then we call the result a \emph{Boolean formula}
(not to be confused with a Boolean function). If the Boolean circuit is in NNF,
then the Boolean formula is in NNF. Examples of Boolean formulas in NNF
include Boolean formulas in conjunctive normal form (CNF), 
and Boolean formulas in
disjunctive normal form (DNF).
Note that Boolean formulas may still contain several occurrences
of the same variable or literal; when we further impose that the Boolean circuit
has a tree structure and that there is only one gate for each variable, we
obtain what is called a \emph{read-once formula}~\citep{angluin1993learning}.

\subsection{Smoothness}

Recall that, for binary decision diagrams, we introduced a notion of
\emph{completeness} which intuitively requires binary decision diagrams to test all variables. We now present the analogous requirement on circuits, which is called \emph{smoothness}.

The notion of \emph{smooth} Boolean circuits has been introduced
in~\citep{darwiche2001tractable}; see also~\citep{DBLP:conf/nips/ShihBBA19}.
An $\lor$-gate $g$ of a Boolean circuit $C$ is \emph{smooth} if for every input $g'$ of
$g$ we have $\vars(g)=\vars(g')$, and $C$ is called \emph{smooth} if all its
$\lor$-gates are smooth. Smoothness is the Boolean circuit analogue of
\emph{completeness} for binary decision diagrams. \emph{Smoothing} is the process of
taking a Boolean circuit~$C$ as input, and constructing a Boolean circuit~$C'$ that is
equivalent to $C$ and that is smooth. This can be done naïvely in quadratic
time as follows. Letting $\X$ be the variables, compute in time $O(|C|\cdot |\X|)$
the set $\vars(g)$ for every $g\in \X$. Then, for every $\lor$-gate $g$ and
input gate $g'$ such that $\vars(g') \subsetneq \vars(g)$, replace $g'$ by an
$\land$-gate of $g'$ and of $\bigwedge_{X \in \vars(g)\setminus \vars(g')} (X
\lor \lnot X)$. It is clear that the resulting Boolean circuit is equivalent and
smooth, and that this can be done in $O(|C|\cdot |\X|)$. Furthermore, if $C$ is
in NNF, then so is~$C'$, and if $C$ is
decomposable, then $C'$ also is. We note
that smoothing can be done more efficiently for structured Boolean circuit
classes~\citep{DBLP:conf/nips/ShihBBA19}. 

\section{From Binary Decision Diagrams to Boolean Circuits} \label{sec:diag_circ}
In this section, we connect the notions of binary decision diagrams
(Section~\ref{sec:diagrams}) and Boolean circuits (Section~\ref{sec:circuits}). We do
so by explaining why binary decision diagrams are in fact specific classes of
Boolean circuits, with a
direct translation from binary decision diagrams to Boolean circuits.
Furthermore, we show how the conditions that we have defined on diagrams can be
preserved by this transformation.

\paragraph*{Linear v-trees.}
The translation from binary decision diagrams to Boolean circuits that we will present,
when applied to ordered binary decision diagrams
(\nobdds), will produce structured Boolean circuits whose v-trees have a specific shape.
We define these v-trees, called \emph{right-linear v-trees}, and explain how
to obtain them from the order respected by the \nobdd.

Formally, given an order $<$ on variables $\X$, we define the
\emph{right-linear vtree} $T_<$ obtained from~$<$ in the following way:
\begin{itemize}
  \item if
$\X$ consists of a single variable $x$ then $T_<$ is the singleton tree whose root
is a leaf node labeled by $x$;
\item otherwise, letting $x$ be the smallest element of~$\X$ according to~$<$,
  we let $T_<$ be the tree whose root is an internal node having $x$ has
    left child and having as right child the root of the tree $T_{<'}$ obtained
    from the order $<'$ which is the restriction of~$<$ to $\X \setminus \{x\}$.
\end{itemize}

Note that the choice of right-linear v-trees is arbitrary; similar
constructions could be made with a left-linear v-tree.

\paragraph*{Converting binary decision diagrams to Boolean circuits.} 
We are now ready to state the immediate translation from binary
decision diagrams to
Boolean circuits, and to explain how conditions on the Boolean circuit can be rephrased to
conditions on the binary decision diagram. This is a folklore observation which has previously
appeared, e.g., as Proposition~3.4 of~\citep{ACMS20}:

\begin{proposition}
  Given a \nbdd $\D$, we can convert it in linear time to a Boolean circuit $C$
  that describes the same Boolean function. The translation preserves the
  following:
  \begin{itemize}
  \item Variable structuredness:
  \begin{itemize}
    \item If the input \nbdd is \emph{free},
    then the resulting Boolean circuit is
    \emph{decomposable} (DNNF).
  \item If the input \nbdd is \emph{ordered} with an order $<$, then the
    resulting Boolean circuit is \emph{structured} with the right-linear vtree $T_<$.
  \end{itemize}

  \item Ambiguity level:
  \begin{itemize}
    \item If the input \nbdd is \emph{unambiguous} then the resulting
      Boolean circuit
      is \emph{deterministic}.
    \item If the input \nbdd is \emph{deterministic} then the resulting
      Boolean circuit
      is \emph{decision}.
  \end{itemize}
  
\item Other conditions:
  \begin{itemize}
    \item If the input is a \emph{decision tree}, then the output is a
      \emph{Boolean formula}.
    \item If the input is \emph{complete}, then the result is \emph{smooth}.
  \end{itemize}
  \end{itemize}
\end{proposition}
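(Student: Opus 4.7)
The plan is to give an explicit local translation that replaces every node of $\D$ by a small gadget, and then verify all the preservation claims by a case analysis on each property.

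First I would fix the translation. For each sink $u$ of $\D$, create a constant $\true$- or $\false$-gate $g_u$ according to its label. For each internal node $u$ labeled by a variable $X$, let $v_1,\dots,v_k$ be the targets of its $0$-edges and $w_1,\dots,w_m$ the targets of its $1$-edges (condition (iii) on \nbdds guarantees $k,m\ge 1$), and build the gadget
\[
  g_u \;\colonequals\; \bigl(\lnot X \land (g_{v_1} \lor \cdots \lor g_{v_k})\bigr)
                 \;\lor\;
                 \bigl(X \land (g_{w_1} \lor \cdots \lor g_{w_m})\bigr),
\]
reusing the gates $g_{v_i}$ and $g_{w_j}$ (so sharing in $\D$ becomes sharing in $C$). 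If $\D$ has sources $s_1,\dots,s_p$ with $p>1$, take the output gate to be a fresh $\lor$-gate over $g_{s_1},\dots,g_{s_p}$; otherwise the output gate is $g_{s_1}$. The total size is $O(|\D|)$, so the conversion is linear time. Semantic correctness is a bottom-up induction on $\D$ showing that $\ba(g_u)=1$ iff there exists an accepting run of $\D$ starting at $u$ and following $\ba$; combining over sources yields $C(\ba)=\D(\ba)$.

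Next I would run through the preservation claims one by one. For \emph{free} $\Rightarrow$ \emph{decomposable}: in the gadget for $u$, the sole $\land$-gates pair a (negation of a) variable gate for $X$ with the disjunction of subcircuits rooted at the children of $u$; by freeness, $X$ does not label any node reachable from those children, so $\vars$ on the two inputs are disjoint. For \emph{ordered with order $<$} $\Rightarrow$ \emph{structured by $T_<$}: map each $\land$-gate appearing in the gadget of $u$ (whose label is some $X$) to the internal node of $T_<$ whose left child is $X$; the right subtree of that v-tree node contains exactly the variables strictly greater than $X$, which is precisely a superset of $\vars(g_{v_i})$ and $\vars(g_{w_j})$ by the ordering condition. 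For \emph{unambiguous} $\Rightarrow$ \emph{deterministic}: the outer $\lor$ in each gadget has inputs guarded by $\lnot X$ and $X$, hence trivially disjoint; each inner $\lor$ over children $g_{v_1},\dots,g_{v_k}$ (or $g_{w_1},\dots,g_{w_m}$) is disjoint because two simultaneously satisfied inputs would combine with an accepting continuation into two distinct accepting runs of $\D$ on the same assignment, contradicting unambiguity. The same argument handles the top-level $\lor$ over multiple sources. For \emph{deterministic} $\Rightarrow$ \emph{decision}: here each internal node has exactly one outgoing $0$-edge and one outgoing $1$-edge, so the inner disjunctions disappear and the gadget reduces exactly to a decision $\lor$-gate with two $\land$-gate children, one containing $X$ and the other containing $\lnot X$; the single-source property of deterministic \nbdds avoids the root-combining $\lor$. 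Finally, for \emph{decision tree} $\Rightarrow$ \emph{formula}, absence of sharing in $\D$ immediately gives a tree-shaped $C$; and for \emph{complete} $\Rightarrow$ \emph{smooth}, a straightforward induction using the fact that every run through $u$ tests the same set of variables shows that the inputs of each $\lor$-gate of $C$ share the same $\vars$.

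The main obstacle I expect is the \emph{unambiguous} $\Rightarrow$ \emph{deterministic} case, because determinism of the inner $\lor$-gates is a \emph{semantic} property and must be derived from a \emph{global} property of $\D$. The argument must carefully lift a pair of accepting subtree runs to a pair of full accepting runs, which relies on condition (iii) of the \nbdd definition to guarantee an accepting continuation exists only when we are careful; the clean way to phrase this is via the inductive correctness statement above, which directly identifies the models of $g_{v_i}$ with accepting partial runs from $v_i$, so that two distinct satisfied inputs to an $\lor$-gate produce two distinct accepting runs of $\D$ on the ambient assignment. Everything else is a mechanical verification of the local gadget against the relevant definition.
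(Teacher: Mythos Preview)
Your proposal is essentially identical to the paper's proof: the same local gadget translation (sinks to constants, internal nodes to the $(\lnot X \land \bigvee g_{v_i}) \lor (X \land \bigvee g_{w_j})$ pattern, a top-level $\lor$ over sources), and the same case-by-case verification of each preservation claim. You also correctly single out the unambiguous $\Rightarrow$ deterministic case as the delicate one, which is exactly where the paper spends the most care.
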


\begin{proof}
  We use the following general linear-time translation
from binary decision diagrams to Boolean circuits, building a Boolean circuit
  $C$ from the input binary decision diagram $\D$:
\begin{itemize}
  \item We replace true
sinks and false sinks respectively by
constant true and false gates.
    \item We replace
internal nodes $n$ on a variable $X$ with a
gate defined like in the definition of decision
gates above; formally, let $g^1_0,\ldots, g^{k_0}_0$ (resp., $g^1_1,\ldots, g^{k_1}_1$)
be the translations of the nodes 
to which $n$ had edges labeled
with $0$ (resp., with $1$).
Construct a gate $g_0$ (resp., $g_1$) to be an $\lor$-gate of the gates $g^1_0,\ldots, g^{k_0}_0$ (resp., of $g^1_1,\ldots, g^{k_1}_1$).
We then translate
$n$ to a $\lor$-gate whose inputs are an
$\land$-gate conjoining $g_1$ and $X$,
and an $\land$-gate conjoining $g_0$
    and $\neg X$. (If $g_0$ has only one input gate, then we
    replace it by that input, and likewise for~$g_1$.)
    \item Last, the output (root) gate of the
Boolean circuit is an $\lor$-gate taking the disjunction of the translations of all the
    sources of the \nbdd; or, if the \nbdd has only one source, then it is the
    gate that translates this source.
\end{itemize}

  The translation process is illustrated in Figure~\ref{fig:diag2circexa}.
One can check that the translation runs in
linear time and produces a Boolean circuit $C$ with the
same semantics as the \nbdd~$\D$, i.e., $C$ that represents the same
  Boolean function as $\D$. Further, we can check that $C$ has the stated
  properties:
  \begin{itemize}
    \item If the input \nbdd~$\D$ is free, for every $\land$-gate $g$ in~$C$
      created when translating a node $n$ of the \nbdd, then $g$ will
      conjoins a literal for the variable $x$ tested by~$n$ with a gate $g'$,
      and the gates $g''$ having a directed path to~$g'$ are gates $g''$ that are translations
      of nodes $n'$ of~$\D$ to which $n$ has a path (along with intermediary
      gates introduced in the translation), so these gates~$g''$ cannot test $x$
      because~$\D$ is free. Hence, $g$ is decomposable.
    \item If the input \nbdd~$\D$ is ordered by an order~$<$, then any
      $\land$-gate $g$ in~$C$ created when translating a node~$n$ of~$\D$ will
      conjoin a literal for the variable~$x$ tested by~$n$ with a gate that
      depends on variables tested by nodes to which~$n$ has a directed path,
      therefore, $C$ is structured by the right-linear v-tree $T_<$.
    \item If the input \nbdd~$\D$ is unambiguous, then, for every node $n$
      of~$\D$, 
      for every assignment~$\as$ of the variables there is at most one
      accepting run which is compatible with~$\as$ and that goes through $n$. This ensures that,
      considering the $\lor$-gates $g_0$ and $g_1$ created in~$C$ when
      translating~$n$, there cannot be two inputs of such a gate that are made
      true by~$\as$. Further, for every assignment, we know that at most
      one source of~$\D$ is the beginning of a successful run, so the $\lor$-gate
      which is the output gate of~$C$ also does not have two mutually
      satisfiable inputs. (Note that the $\lor$-gates created as the translation
      of~$n$ above are decision gates, so they are always deterministic.)
    \item If the input \nbdd~$\D$ is deterministic, then the gates of the form
      $g_0$ and $g_1$ in the translation had only one input, so they were merged
      with that input; and likewise $\D$ has only one source so we did not
      create an~$\lor$-gate as the output gate of~$C$. Hence, in this case,
      all~$\lor$-gates in~$C$ are those that translate a node of~$\D$, and they
      are decision gates.
    \item If the input $\D$ is a decision tree, then there is no sharing in the
      process described above (creating different copies of variable gates
      labeled by the same variable), so it creates a Boolean formula.
    \item If the input $\D$ is complete, then an easy induction shows that the
      set of variables having a directed path to a gate $g$ of the
      Boolean circuit
      created to translate a node~$n$ of~$\D$ is
      precisely the set of variables tested by the nodes $n'$ of~$\D$ to
      which~$n$ have a directed path, so as~$\D$ is complete we conclude
      that~$C$ is smooth. \qedhere
  \end{itemize}
\end{proof}

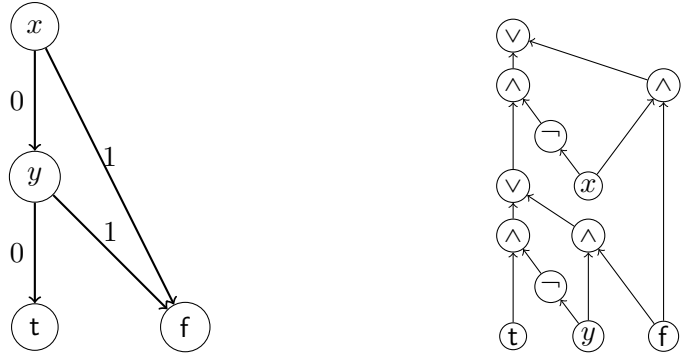
\begin{figure}
  \hfill
  \begin{tikzpicture}[scale=2]
    \node[circle,draw] (a) at (0, 0) {$x$};
    \node[circle,draw] (b) at (0, -1) {$y$};
    \node[circle,draw] (t) at (0, -2) {$\true$};
    \node[circle,draw] (f) at (1, -2) {$\false$};
    \draw[thick,->] (a) -- node[left] {0} (b);
    \draw[thick,->] (b) -- node[left] {0} (t);
    \draw[thick,->] (a) -- node[above] {1} (f);
    \draw[thick,->] (b) -- node[above] {1} (f);
  \end{tikzpicture}
  \hfill
  \begin{tikzpicture}[scale=2]
    \node[circle,draw,inner sep=1pt] (o) at (0, 0) {$\lor$};
    \node[circle,draw,inner sep=1pt] (a1) at (0, -.33) {$\land$};
    \node[circle,draw,inner sep=1pt] (a2) at (1, -.33) {$\land$};
    \node[circle,draw,inner sep=1pt] (a1n) at (.25, -.67) {$\neg$};
    \node[circle,draw,inner sep=1pt] (x) at (.5, -1) {$x$};
    \node[circle,draw,inner sep=1pt] (b) at (0, -1) {$\lor$};
    \node[circle,draw,inner sep=1pt] (b1) at (0, -1.33) {$\land$};
    \node[circle,draw,inner sep=1pt] (b2) at (.5, -1.33) {$\land$};
    \node[circle,draw,inner sep=1pt] (b1n) at (.25, -1.67) {$\neg$};
    \node[circle,draw,inner sep=1pt] (y) at (.5, -2) {$y$};
    \node[circle,draw,inner sep=1pt] (t) at (0, -2) {$\true$};
    \node[circle,draw,inner sep=1pt] (f) at (1, -2) {$\false$};

    \draw[->] (f) -- (a2);
    \draw[->] (f) -- (b2);
    \draw[->] (t) -- (b1);
    \draw[->] (b1n) -- (b1);
    \draw[->] (y) -- (b1n);
    \draw[->] (y) -- (b2);
    \draw[->] (a1n) -- (a1);
    \draw[->] (x) -- (a1n);
    \draw[->] (x) -- (a2);
    \draw[->] (b) -- (a1);
    \draw[->] (a1) -- (o);
    \draw[->] (a2) -- (o);
    \draw[->] (b1) -- (b);
    \draw[->] (b2) -- (b);

  \end{tikzpicture}
  \hfill\null
  \caption{Left: OBDD for the Boolean function $\neg x \land \neg y$. Right:
  equivalent dec-SDNNF.}
  \label{fig:diag2circexa}
\end{figure}

\section{Automata} \label{sec:automata}
This section presents the notions of \emph{word automata} and \emph{tree
automata}. It then explains how to relate these notions to binary
decision diagrams and Boolean circuits, respectively, via the notion of \emph{provenance
circuits} for automata. The use of Boolean circuits for provenance representations is
originally from~\citep{deutch2014circuits}, and its use for automata is
in~\citep{amarilli2015provenance}; but the relation between word automata and
binary decision diagrams has been studied in other contexts,~e.g., \citep{bollig2016minimization}.

The section first defines word automata, then tree automata, and then explains how to compute provenance circuits for these automata. 

\subsection{Automata on words}
\label{subsubsec:ufas-words}

We define standard notions from formal language theory, before defining word automata.

\paragraph*{Alphabets, words, languages.}
An \emph{alphabet} is a finite set~$\Sigma$ of \emph{letters}.  A \emph{word
on~$\Sigma$} is a finite (possibly empty) sequence $\w = w_1,\ldots,w_n$ of
letters from~$\Sigma$; its \emph{length}~$|\w|$ is~$n$.  The set of all words
over~$\Sigma$ is denoted by~$\Sigma^*$.  
A \emph{language} over~$\Sigma$ is a subset of~$\Sigma^*$. 

\paragraph*{Word automata.}
A \emph{non-deterministic finite automaton} (NFA)~$\A = (Q,I,F,\delta)$
over~$\Sigma$ consists of a finite set~$Q$ of \emph{states}, a set $I \subseteq Q$ of \emph{initial
states}, a set~$F \subseteq Q$ of \emph{final states}, and a
\emph{transition relation}~$\delta \subseteq Q \times \Sigma \times Q$.  
We define $|\A|$, the \emph{size} of $A$, to be $|A|  \colonequals |\Sigma| +
|Q| + |\delta|$.
A \emph{partial run of~$\A$ on a word~$\w\in \Sigma^*$} is a sequence of states~$\rho =
q_0, q_1, \cdots, q_{|\w|}$ such that $(q_i, w_{i+1}, q_{i+1}) \in \delta$ for
every~$i \in \{0,\ldots,|\w|-1\}$.
We say that~$\rho$ \emph{starts at~$q_0$}
and~\emph{ends at~$q_{|\w|}$}. 
A \emph{run} of~$\A$ on~$\w$ is a partial run of $\A$ on~$\w$ which starts at an initial state. 
An \emph{accepting run} of~$\A$ on~$\w$ is a run of~$\A$ on~$\w$
which ends in a final state.
For a word~$\w$, we say that
\emph{$\w$ is accepted by~$\A$} if there is an accepting run of~$\A$ on~$\w$.
The \emph{language accepted by~$\A$}, denoted~$L(\A)$, is the set of words
over~$\Sigma$ that are accepted by~$\A$.

The automaton~$\A$ is called
\emph{unambiguous} (UFA) if for every word~$\w \in \Sigma^*$ there exists at
most one accepting run of~$\A$ on~$\w$. Note that this is a semantic condition, which is not straightforward to verify syntactically.
By contrast, we say that the automaton $\A$ is \emph{deterministic} (DFA) if it satisfies the following syntactic condition: there is precisely one initial state in~$I$, and, for every state $q\in Q$ and letter $a \in \Sigma$, there is at most one state $q' \in Q$ such that the transition $(q, a, q')$ is in~$\delta$. In this case, we can equivalently see $\delta$ as a partial function from $Q \times \Sigma$ to~$Q$.
Note that, for any deterministic automaton $\A$ and word~$\w$, then there is at most one run of $\A$ on~$\w$ (accepting or not). Hence,
a deterministic automaton is necessarily unambiguous, but the converse is not true: there are some unambiguous automata that are not deterministic.

We say that a word automaton $\A$ is \emph{trimmed} if, for every state $q\in
Q$, there is a word $\w$ in the language of $\A$ such that $q$ occurs in an
accepting run $\rho$ of $\A$ on~$\w$. Notice that this is equivalent to saying
that, for every state $q\in Q$, there exists a path in the underlying directed
graph of the automaton from some initial state to $q$ (we say that $q$ is
\emph{accessible}), and a path from $q$ to some final state (we say that $q$
is \emph{co-accessible}). Given a word automaton $\A$, we can easily convert
it in linear time to a trimmed automaton which is equivalent (i.e., that
recognizes the same language), and this conversion preserves unambiguity
and determinism.

Note that an unambiguous word automaton~$\A$ which is trimmed satisfies in particular a stronger requirement (*): for any word $\w$, there cannot be two different runs of~$\A$ on~$\w$ that lead to the same state~$q$. Indeed, if this were to happen, then as $q$ is co-accessible we could complete $\w$ to a word on which $\A$ has two accepting runs, contradicting unambiguity. This property~(*) is generally not satisfied if the automaton~$\A$ is not trimmed.

\subsection{Automata on trees}
\label{subsubsec:ufas-trees}

We now define the notions of tree languages and automata over trees.

\paragraph*{Trees.}
Let~$\Sigma$ be an alphabet.  A \emph{$\Sigma$-tree $(T,\lambda)$} is a finite,
rooted, ordered binary tree~$T$ (all internal nodes have exactly two children,
and these are ordered) such that every node~$n$ of~$T$ is labeled by a
letter~$\lambda(n) \in \Sigma$. We say that~$T$ is the \emph{skeleton}
of~$(T,\lambda)$. We denote by~$\mathcal{T}(\Sigma)$ the set of
all~$\Sigma$-trees. A \emph{tree language} over~$\Sigma$ is a (potentially infinite) set of $\Sigma$-trees. The \emph{complement} of a tree language~$L$ is the set of $\Sigma$-trees that are not in~$L$.

\paragraph*{Tree automata.}
We only consider \emph{bottom-up} finite tree automata in this document.
A \emph{non-deterministic (bottom-up) finite tree automaton} (NFTA)
$\A = (Q,F,\Delta,\iota)$ over~$\Sigma$ consists of a finite set of
\emph{states}~$Q$, a set~$F \subseteq Q$ of final states, an
\emph{initialization relation} $\iota \subseteq \Sigma \times Q$, and a \emph{transition
relation} $\Delta \subseteq Q \times Q \times \Sigma \times Q$.
We define $|\A|$, the \emph{size} of $A$, to be $|A|  \colonequals |\Sigma| +
|Q| + |\Delta| + |\iota|$.
Let~$(T,\lambda)$ be a~$\Sigma$-tree.  A \emph{run of~$\A$ on $(T,\lambda)$} is
a~$Q$-tree~$(T,\lambda')$ having the same skeleton $T$ and satisfying the following conditions:
\begin{itemize}
  \item every leaf is labeled by a state given by applying the initialization relation to the label of that leaf: formally, for every leaf~$n$ of~$T$ we have that~$(\lambda(n),\lambda'(n))\in \iota$;
  \item every internal node is labeled by a state given by applying the transition relation to the two states labeling the two children: formally, for every internal node~$n$ of~$T$, letting~$n_1,n_2$ be the (ordered) children of~$n$,
        we have that $(\lambda'(n_1),\lambda'(n_2),\lambda(n),\lambda'(n)) \in \Delta$.
\end{itemize}
We say that run~$(T,\lambda')$ \emph{ends at state~$q$} when~$q = \lambda'(r)$
for~$r$ the root of~$T$.  The run~$(T,\lambda')$ is \emph{accepting} if,
letting~$n$ be the root of~$T$, we have~$\lambda'(n)\in F$. If there exists an
accepting run of~$\A$ on~$(T,\lambda)$ then we say that~$(T,\lambda)$ is
accepted by~$\A$. The \emph{language} $L(\A)$ of $\A$ is the set of $\Sigma$-trees accepted by~$\A$. 

The automaton~$\A$ is called \emph{unambiguous} (UFTA) if
for any~$\Sigma$-tree~$(T,\lambda)$, there exists at most one accepting run
of~$\A$ on~$(T,\lambda)$: again, this is a semantic criterion.
We say that $\A$ is a \emph{deterministic bottom-up finite tree automaton}
(DFTA) if it satisfies the following syntactic criterion:
(1) for every letter $a\in \Sigma$, there is at most one state $q$
such that $(a,q)\in \iota$; and (2) for each pair of states $(q_1, q_2) \in Q$
and letter $a \in \Sigma$, there is at most one state $q$ such that $(q_1, q_2,
a, q) \in \delta$. For a deterministic automaton,
we can equivalently see $\iota$ as a partial function from $\Sigma$ to~$Q$, and see $\delta$ as a partial
function from $Q \times Q \times \Sigma$ to $Q$.
Again, if $\A$ is deterministic, then for any $\Sigma$-tree
$(T,\lambda)$ there is at most one run  of $\A$ on $(T,\lambda)$ (accepting or not).
Hence, again, a deterministic tree automaton is necessarily unambiguous, but the converse does not necessarily hold.

We say that $\A$ is \emph{trimmed} if, for every state $q \in Q$, there is a
$\Sigma$-tree $(T,\lambda)$ in the language accepted by $\A$ and a run $\rho$ of
$\A$ on $(T, \lambda)$ in which $q$ appears. Given a tree automaton, we can
convert it in linear time to an automaton which is equivalent (recognizes the
same language) and is trimmed, and this conversion preserves unambiguity and determinism.

Note that a UFTA $\A$ which is trimmed satisfies again a stronger requirement
(*):
for every
$\Sigma$-tree~$(T,\lambda)$ and state~$q \in Q$ there is at most one run
of~$\A$ on~$(T,\lambda)$ that ends at~$q$.

\subsection{Computing Boolean circuits and binary decision diagrams from automata}

We now explain how to compute Boolean circuits and binary decision diagrams from automata. We first
give the construction for the most general formalisms possible (NFAs and
NFTA), giving nOBDD and structured DNNF circuits respectively. Then, we observe
how these constructions can apply to restricted automata (unambiguous and
deterministic), and which conditions these ensure on the resulting
binary decision diagrams and Boolean circuits.

We assume for now that the alphabet used by automata is $\Sigma = \{0, 1\}$; we will
later explain why other alphabets can also be handled.

\begin{definition}
  Let $\A$ be an NFA over alphabet $\Sigma = \{0, 1\}$, and let $n \in \NN$. The
  \emph{provenance} of~$\A$ on~$n$ is a Boolean function $\phi_{\A, n}$
  over variables
  $\X = \{X_1, \ldots, X_n\}$ defined in the following way: for any assignment $\ba$
  of~$\X$, letting $w_\ba = \as(X_1) \dots \as(X_n)$ be the corresponding word of length $n$ over~ $\Sigma$, we have that $\ba$ satisfies $\phi_{\A, n}$ iff $\A$ accepts $w_\ba$.
  A \emph{provenance circuit} (resp, \emph{provenance decision diagram}) of~$\A$
  on~$n$ is then a Boolean circuit (resp, binary decision diagram) representing
  $\phi_{\A, n}$.

  Likewise, if $\A$ is a NFTA
  over alphabet $\Sigma = \{0, 1\}$ and $T$ is an
  unlabeled full binary tree, the \emph{provenance} of~$\A$ on~$T$ is a Boolean
  function $\phi_{\A, T}$ whose variables $\X$ are the nodes of~$T$ and which is defined in the following way:
  for any assignment $\ba$ of~$\X$, letting $T_\ba$ be the $\Sigma$-tree
  obtained from~$T$ by labeling the nodes according to~$\ba$, we have that $\ba$
  satisfies $\phi_{\A, T}$ iff $\A$ accepts $T_\ba$. 
  A \emph{provenance circuit} (resp., \emph{provenance decision diagram})
  of~$\A$ on~$T$ is a Boolean circuit (resp., binary decision diagram) representing
  $\phi_{\A, T}$.
\end{definition}

We show in this section the following two results on word automata and tree
automata:

\begin{proposition}
  \label{prp:wordprov}
  Let $\A$ be an NFA over alphabet $\Sigma = \{0, 1\}$ and let $n \in \NN$. We
  can build in time $O(|\A| \times n)$ a provenance diagram $C$ of~$\A$ on~$n$
  which is a complete nOBDD with variable order $X_1, \ldots, X_n$.
  Further, if $\A$ is unambiguous then $C$ is an uOBDD, and if $\A$ is
  deterministic then $C$ is an OBDD.
\end{proposition}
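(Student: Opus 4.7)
The plan is to build $C$ by \emph{unrolling} the automaton $\A$ into $n$ successive layers, one per position of the input word. Before unrolling, I preprocess $\A$ to make it \emph{letter-complete}: for every state $q\in Q$ and every letter $a\in\Sigma$, I ensure at least one transition $(q,a,q')\in\delta$. I do this by adding a fresh non-final ``dead'' state $q_\bot$, adding $(q,a,q_\bot)$ for every missing $(q,a,\cdot)$, and adding self-loops $(q_\bot,0,q_\bot),(q_\bot,1,q_\bot)$. Since $q_\bot$ is non-final and any run reaching it stays there, the language $L(\A)$ is unchanged; unambiguity is preserved because we only add transitions leading into a non-final sink state; and determinism is preserved because after completion each pair $(q,a)$ has exactly one outgoing transition. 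This blow-up is $O(|\A|)$.

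Next, I construct a layered DAG. For each $i\in\{0,\ldots,n-1\}$ and each state $q$ of the completed automaton, I create an internal node $(q,i)$ labeled by the variable $X_{i+1}$. I create a $\true$-sink $s_\true$ and a $\false$-sink $s_\false$. For each transition $(q,a,q')$ and each $i\in\{0,\ldots,n-2\}$ I add an $a$-labeled edge from $(q,i)$ to $(q',i+1)$; for $i=n-1$ I add an $a$-labeled edge from $(q,n-1)$ to $s_\true$ if $q'\in F$ and to $s_\false$ otherwise. The sources are the nodes $(q,0)$ for $q\in I$. By construction edges only go from layer $i$ to layer $i+1$ and all nodes of layer $i$ test $X_{i+1}$, so the variable order $X_1<\cdots<X_n$ is respected; letter-completeness of $\A$ guarantees condition (iii); every source-to-sink path has length exactly $n+1$ and tests $X_1,\ldots,X_n$ each once, so $C$ is a complete \nobdd. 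The total size is $O(n\cdot|\A|)$ and the construction is done in that time.

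For correctness, a straightforward induction on $i$ shows that the runs of $C$ following an assignment $\ba$ are in bijection with the runs of the completed $\A$ on $w_\ba=\ba(X_1)\cdots\ba(X_n)$, and that such a run ends at $s_\true$ iff the corresponding automaton run ends in $F$. Hence $C$ represents $\phi_{\A,n}$. For the preservation clauses: if $\A$ is unambiguous then for every $\ba$ the word $w_\ba$ admits at most one accepting run of $\A$, hence at most one accepting run of $C$, so $C$ is a \uobdd. If $\A$ is a DFA then after completion it has a unique initial state and exactly one transition per pair $(q,a)$; this yields a single source $(q_0,0)$ in $C$ and exactly one outgoing $0$-edge and one outgoing $1$-edge at every internal node, so $C$ is an \obdd by the alternative characterization given in the text.

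The main obstacle is the apparent tension between condition (iii) of the nBDD definition and the fact that an NFA (or even a DFA) may have missing transitions; the dead-state completion resolves this cleanly while preserving both unambiguity and determinism, which is the key trick of the argument. Everything else is a routine layered unrolling.
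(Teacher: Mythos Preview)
Your approach is the same layered unrolling as the paper's, and almost everything goes through, but there is one genuine gap: you are missing a trimming step. You create a node $(q,0)$ for \emph{every} state $q$ of the completed automaton and then assert that ``the sources are the nodes $(q,0)$ for $q\in I$''. But in the \nbdd formalism used here, a source is \emph{any} node with no incoming edge; since no edge enters layer~$0$, \emph{all} layer-$0$ nodes are sources, including those with $q\notin I$. Your diagram therefore admits runs starting at non-initial states, the claimed bijection between runs of $C$ and runs of $\A$ on $w_\ba$ fails, and $C$ may accept assignments $\ba$ with $w_\ba\notin L(\A)$. The same issue breaks the unambiguity argument: extra accepting runs of $C$ can appear that do not correspond to accepting runs of~$\A$.

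The paper handles this with one extra line: after building the layered DAG it deletes every node not reachable from some designated node $g_{1,q_0}$ with $q_0\in I$. After this pass the only remaining nodes without incoming edges are exactly the intended ones, and the rest of your argument is correct verbatim. Note that simply omitting the layer-$0$ nodes for $q\notin I$ is not sufficient on its own: a node $(q',1)$ reachable only from such deleted nodes would then itself become an unwanted source, and so on; you really need the reachability pass (which is linear time and preserves the stated size bound).
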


\begin{proposition}
  \label{prp:treeprov}
  Let $\A$ be a NFTA over alphabet $\Sigma = \{0, 1\}$ and let $T$ be an
  unlabeled full binary tree. We
  can build in time $O(|\A| \times |T|)$ a provenance circuit $C$ of~$\A$ on~$T$
  which is a smooth structured DNNF (smooth SDNNF), together with its vtree.
  Further, if $\A$ is unambiguous then $C$ is a d-SDNNF.
\end{proposition}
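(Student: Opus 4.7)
My plan is to adapt the standard ``bottom-up unravelling'' construction for tree automata. The variables are the nodes of~$T$, i.e., $\X = V(T)$. For each node $n$ of $T$ and each state $q \in Q$ I would build a gate $g_{n,q}$ with intended semantics ``there exists a run of~$\A$ on the subtree $T_n$ of~$T$ rooted at~$n$, labeled according to the restriction of the assignment to $V(T_n)$, that ends at state~$q$''. Bottom-up: at a leaf $n$, set $g_{n,q} := \bigvee_{a\,:\,(a,q)\in\iota} [X_n = a]$, where $[X_n=0] := \lnot X_n$ and $[X_n=1] := X_n$, omitting $g_{n,q}$ (treating it as $\false$) when the disjunction is empty. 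At an internal node $n$ with children $n_1, n_2$, set $g_{n,q} := \bigvee_{(q_1,q_2,a,q)\in\Delta} [X_n = a] \land (g_{n_1,q_1} \land g_{n_2,q_2})$, skipping any disjunct whose $g_{n_i,q_i}$ has been omitted. The output gate is $\bigvee_{q \in F} g_{r,q}$ where $r$ is the root of~$T$.

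In parallel I would construct a vtree $V_n$ per node of $T$: if $n$ is a leaf then $V_n$ is the single leaf $\{X_n\}$; otherwise, if $n$ has children $n_1, n_2$, then $V_n$ is rooted at an internal vtree node whose left child is the leaf $\{X_n\}$ and whose right child is an internal vtree node with children $V_{n_1}$ and $V_{n_2}$. The output vtree is $V_r$. With this choice, the binary $\land$-gate $[X_n=a] \land (g_{n_1,q_1} \land g_{n_2,q_2})$ is structured by the root of $V_n$ and its right subgate by the right child of that root, so the resulting circuit is an SDNNF (hence a DNNF). Correctness of the semantics follows by a direct induction on~$T$ matching the bottom-up definition of a run, and the size and construction time are $O(|T| \cdot (|Q| + |\iota| + |\Delta|)) = O(|T| \cdot |\A|)$, including the vtree.

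For smoothness I would show by induction on $n$ that either $g_{n,q}$ is omitted or $\vars(g_{n,q}) = V(T_n)$. At a leaf this is immediate since $\vars(g_{n,q}) = \{X_n\} = V(T_n)$ whenever the gate is not omitted. At an internal node, every non-omitted disjunct of $g_{n,q}$ contributes $\{X_n\} \cup V(T_{n_1}) \cup V(T_{n_2}) = V(T_n)$ by the inductive hypothesis, so all inputs to $g_{n,q}$ share the same variable set. The output $\lor$-gate is also smooth, as each $g_{r,q}$ has variables $V(T) = \X$.

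For the unambiguous case, I would first trim $\A$ in linear time (this preserves unambiguity, as noted in the excerpt) and then verify determinism of every $\lor$-gate. Leaf gates are trivially deterministic since their disjuncts are among the mutually exclusive literals $X_n$ and $\lnot X_n$. For $g_{n,q}$ at an internal node, suppose two distinct disjuncts indexed by transitions $(q_1,q_2,a,q)$ and $(q_1',q_2',a',q)$ were both satisfied by some assignment~$\ba$; then $a = a'$ (forced by $\ba(X_n)$), and by the inductive correctness of the subgates we obtain runs of~$\A$ on $T_{n_i}$ ending at~$q_i$ and at~$q_i'$ respectively. Combining these via the two transitions produces two runs on~$T_n$ labeled by~$\ba$ both ending at~$q$, distinct because $(q_1,q_2) \neq (q_1',q_2')$, contradicting the stronger property~$(*)$ of trimmed unambiguous tree automata. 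For the output $\lor$-gate, two satisfied disjuncts would yield two accepting runs of~$\A$, directly contradicting unambiguity. The step I expect to be the main obstacle is precisely this reliance on trimming: without it, an unambiguous~$\A$ can still produce two distinct sub-runs on some subtree that end at different non-co-accessible states, making intermediate $\lor$-gates non-deterministic even though~$\A$ admits at most one accepting run overall.
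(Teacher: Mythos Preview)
Your proposal is correct and follows essentially the same approach as the paper: the same bottom-up gate family $g_{n,q}$, the same vtree (your $V_n$ is precisely what the paper calls the \emph{leaf-push} of~$T$), the same correctness induction, and the same appeal to trimming plus property~$(*)$ for determinism in the unambiguous case. Your treatment of smoothness---omitting unreachable-from-below gates and then showing $\vars(g_{n,q}) = V(T_n)$ by induction---is in fact slightly more explicit than the paper's, which simply removes gates with no path to the output and asserts smoothness, but the substance is the same.
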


Note that, in Proposition~\ref{prp:treeprov}, there is no discussion of the case
where the input tree automaton is deterministic. We are not aware of a standard
Boolean circuit class corresponding to deterministic automata, though a notion
of \emph{upwards-deterministic Boolean circuits} is introduced for that purpose in~\citep{amarilli2017circuit}.

We briefly comment on the relationship of these results
to~\citep{amarilli2015provenance}. In the latter work, the provenance for
automata is defined on an alphabet consisting of a fixed part together with a
Boolean annotation. For instance, for word automata, the alphabet is $\Sigma
\times \{0, 1\}$, and the provenance is defined on an input word of~$\Sigma^*$,
to describe which of the Boolean annotations of the word are accepted. The
results above, with alphabet $\{0, 1\}$, allow us to recapture this setting.
Indeed, we can modify automata working on a larger alphabet to first read a
binary representation of the letter in~$\Sigma$ followed by the Boolean annotation.
Applying the results above, and fixing the inputs corresponding to letters to
the intended values, gives us the provenance circuit in the sense
of~\citep{amarilli2015provenance}. This uses the fact that the Boolean circuit
and binary decision diagram classes that we consider are closed under the \emph{conditioning} operation where we force a variable to be equal to a specific value.

We first prove Proposition~\ref{prp:wordprov} as a warm-up:

\begin{proof}[Proof of Proposition~\ref{prp:wordprov}]
  We assume that the input automaton $\A$ is \emph{complete} in the sense that,
  for every state $q$ and every letter $b \in \Sigma$, there is at least one
  transition for letter $b$ on state~$q$. We can clearly make $\A$ complete in
  linear time up to adding a sink state.

  We build a binary decision diagram with decision nodes $g_{i,q}$ for each $1 \leq i \leq n$ and
  for each state~$q$: the node $g_{i,q}$ is labeled with the variable $X_i$.
  The initial nodes are $g_{1, q_0}$ for each initial state $q_0 \in I$.
  We also have sinks $g_{n+1,q}$ for each
  state~$q$: the sink $g_{n+1,q}$ is a 0-sink if $q \notin F$, and it is a 1-sink if $q \in F$.

  Now, for every $1 \leq i \leq n$, we add the following edges: for each
  transition from a state $q$ to state $q'$ when reading symbol $b \in \Sigma$,
  we add an edge to $g_{i,q}$ labeled~$b$ to $g_{i+1,q'}$.

  The construction satisfies the time bound. We modify the result in linear time
  to trim it, removing all nodes that do not have a path from a starting node.
  The result is an nOBDD, because
  each decision node has at least one outgoing 0-edge and 1-edge, and the
  variables are ordered in the right way. Note that it is also complete in the
  sense that every run tests all variables.

  We show correctness by finite induction: for every $0 \leq i \leq n$, given
  a word $w$ of length $i$, the reachable nodes $g_{i+1,q}$ by the partial
  valuation corresponding to~$w$ are those for the states $q$ that we can reach
  when reading~$w$. The base case is immediate: when reading the empty word, we
  can get precisely to the initial state. Now, for the induction step, if the
  states that we can reach when reading a word of length $i$ are correct, then
  when reading an extra letter $b \in \Sigma$ we can go precisely to the states
  having a transition labeled $b$ from the previously reachable states, so the
  result is correct. The finite induction applied to $i=n$ confirms that we can
  reach a final state (i.e., the word is accepted) iff we can reach a 1-sink.

  Now, observe that if the automaton is unambiguous, then for any valuation
  there is at most one way to reach the 1-sink, because there is at most one
  accepting run of the automaton on the corresponding word. Further, if the
  automaton is deterministic, then for any valuation there is at most one
  consistent path, because each decision node has at most one outgoing 0-edge
  and 1-edge.
\end{proof}

We next prove Proposition~\ref{prp:treeprov}. To do this, we first need to do a
small adjustment: v-trees are defined in Section~\ref{sec:circuits}
so that the variables are at the leaves,
but tree automata read labels on all tree nodes, including internal nodes.

\begin{definition}
  Let $T$ be a finite rooted ordered binary tree with nodes $N$. Its
  \emph{leaf-push}
  is the finite rooted ordered binary tree $T'$ obtained by applying bottom-up the following
  transformation: the transformation of a leaf node is this leaf node, and we replace
  each internal node $n$ with children $n_1$ and $n_2$ by an internal node $n'$
  having as children one leaf $n$ and an internal node $n''$ having as children
  the transformations of $n_1$ and $n_2$.

  Note that the leaves of $T'$ are precisely in bijection with the nodes of~$T$.
\end{definition}

\begin{proof}[Proof of Proposition~\ref{prp:treeprov}]
  We do not assume this time that the input tree automaton $\A$ is complete, but
  must instead assume that it is \emph{trimmed}, so as to satisfy property (*)
  above.

  We build a Boolean circuit bottom-up, with $\lor$-gates $g_{n,q}$ for each tree node~$n$ of the
  input tree~$T$.
  The output gate will be an $\lor$-gate $g$ doing the disjunction of all the
  $g_{r,q}$ for $q$ final and for $r$ the root of~$T$.

  Now, for every leaf $n$ of the input tree~$T$, we add
  a variable gate for~$n$ as an input of the gate $g_{n,q}$ for each $q \in \iota(1)$, and
  we add a gate for the negation of~$n$ as an input of the gate $g_{n,q}$ for
  each $q \in \iota(0)$.

  For every internal node $n$ of the input tree $T$ with children $n_1$ and
  $n_2$, for every pair of states $q_1$ and $q_2$, for every $b \in \{0, 1\}$,
  for every state $q$ having a transition from $q_1, q_2, b$, we add as input to
  $g_{n,q}$ a $\land$-gate having as first input either $n$ or $\neg n$
  depending on whether $b=1$ or $b=0$, and as second input a $\land$-gate having
  as first input $g_{n_1,q_1}$ and as second input $g_{n_2,q_2}$.

  We them remove from the obtained Boolean circuit all gates that have no directed path
  to the output gate.

  The construction satisfies the time bound.  Note that
  the circuit is clearly in NNF. We claim that it is decomposable, and that it
  is structured by a v-tree which is the leaf-push of the tree~$T$. Indeed, we
  can easily show by bottom-up induction that the domain of each gate $g_{n,q}$
  is a subset of the nodes of the subtree of~$T$ rooted at~$n$ (including~$n$),
  so that the $\land$-gates are indeed structured according to the leaf-push.
  We additionally point out that the circuit is smooth: this relies on the fact that
  we removed gates having no directed path to the output gate.

  We show correctness again by finite induction: for every tree node~$n$ of~$T$, given
  a labeling of the subtree~$T_n$ of~$T$ rooted at~$n$,
  the gates $g_{n,q}$ satisfied by the corresponding partial assignment are
  precisely those for the states $q$ that we can reach
  when reading the labeling of~$T_n$ in~$\A$.
  The base case corresponds to leaves, where indeed the gates $g_{n,q}$ that are
  satisfied follow the initial function $\iota$ by construction.

  Now, for the induction step, let us consider an internal node~$n$ of~$T$ with
  children $n_1$ and $n_2$. By induction hypothesis, we assume that the
  states that we can reach when reading a labeling $\lambda_1$ of the tree
  $T_{n_1}$ and when reading a labeling $\lambda_2$ of the tree $T_{n_2}$ are the ones given by the invariant.
  Let us show that the invariant is satisfied when reading a labeling $\lambda$ of~$T_n$.
  Note that $\lambda$ is defined by a labeling
  $\lambda_1$ on~$T_{n_1}$, a labeling $\lambda_2$ on $T_{n_2}$, and a label $b \in
  \Sigma$ on~$n$. The gates $g_{n,q}$ that are satisfied are precisely those for
  which there is a transition from $q_1$ and $q_2$ and $b$ to~$q$ and for which
  $g_{n_1,q_1}$ and $g_{n_2,q_2}$ are satisfied: this allows us to conclude from
  the induction hypothesis. Hence, using the induction result on
  the root $r$ of~$T$ (where $T_r = T$), we conclude that the output gate of the
  Boolean circuit is true on an assignment $\as$ iff there is a final state of~$\A$ that we can
  reach on the labeling of~$T$ according to~$\as$.

  Now, observe that if the automaton is unambiguous, then each $\lor$-gate is
  deterministic. Indeed, if the output gate had two inputs that are mutually
  satisfiable, then it witnesses the existence of a labeling of the tree~$T$
  on which $\A$ reaches two different final states, contradicting the
  unambiguity of~$\A$. Likewise, if a gate $g_{n,q}$ has two inputs that are
  mutually satisfiable, as they are satisfied by the same valuation the literal
  on~$n$ must be the same so it must be the case that we can simultaneously
  satisfy $g_{n_1, q_1}$ and $g_{n_2, q_2}$, and $g_{n_1,q_1'}$ and
  $g_{n_2,q_2'}$, for two 2-tuples $(q_1,q_2) \neq (q_1',q_2')$. This witnesses
  that, on this labeling of~$T_n$, the automaton has two distinct runs leading
  to state~$q$ at the root. This is a contradiction of property (*).
\end{proof}

\section{Conclusion and Extensions} \label{sec:conclusion} 
We have introduced in this document the notions of binary decision diagrams,
Boolean circuits, and automata. We have explained in which sense binary decision diagrams can be seen as a special case of Boolean circuits. We have also
explained how automata can be translated to structured Boolean circuits (for
tree automata) or ordered binary decision diagrams (for word automata).

We close the document by reviewing topics which are not presently covered by the document, but could be covered in further versions of the document or by follow-up works:

\begin{itemize}
    \item The connections between \emph{SDDs and automata}, between
      \emph{strongly deterministic Boolean circuits and automata}, or the
      question of which Boolean circuits can be associated to \emph{deterministic automata} (a related notion is \emph{upwards-determinism} in~\citep{amarilli2017circuit}).
    \item The notion of \emph{$k$-unambiguity} for automata (i.e., having \emph{at most $k$} accepting runs), and the classes of
      Boolean circuits and binary decision diagrams to which this corresponds.
    \item The notion of \emph{width} for structured
      Boolean circuits~\citep{capelli2019tractable} and for binary
      decision diagrams, and its connection to the number of states of automata.
    \item The class of \emph{circuits having a compatible order}~\citep{amarilli2017circuit}, which is intermediate between stucturedness and decomposability (i.e., it restricts the possible conjunctions but in a weaker way than requiring a fixed v-tree).
\item The question of the \emph{complexity of various problems} on the various
  classes of Boolean circuits and binary decision diagrams, in the spirit of the
    knowledge compilation map~\citep{darwiche2002knowledge}; or their
    \emph{closure under operations} (e.g., given two Boolean circuits in some formalism, can we tractably compute their disjunction, their conjunction, etc., in the same formalism?); or the \emph{complexity of translating from one formalism to another} (some results of this kind are surveyed in~\citep{ACMS20})
\item The complexity, given an input Boolean circuit or binary decision diagram, of \emph{testing which conditions it satisfies}: this is generally tractable for syntactic criteria (e.g., decomposability, decision), but can be intractable for semantic criteria (e.g., determinism).
\item The extension from Boolean circuits to different circuit types. These include in particular \emph{multivalued circuits}, which are defined over larger sets than the Boolean values; such circuits can be useful in a database context because they relate to the notion of \emph{factorized databases}~\citep{olteanu2015size}.
We can also interpret circuits more generally over semirings (of which the Boolean semiring is a special case), which has been used in the setting of \emph{provenance circuits for semiring provenance}~\citep{deutch2014circuits}, and which has been exploited for tractable model counting \citep{kimmig2017algebraic}. Last, we can consider \emph{arithmetic circuits}~\citep{shpilka2010arithmetic}, which are circuits
computing a polynomial over a given field~$\mathbb{F}$, with internal nodes corresponding to multiplication or addition. The analogue of decomposability is then to require
    that the polynomial computed is \emph{syntactically multilinear}, i.e., that in the expanded form of the polynomial, in every monomial, each variable has exponent either zero or one.
Further, the analogue of determinism is then that every monomial in the expanded form of the polynomial has coefficient either zero or one.
\item The connections to \emph{probabilistic circuits}~\citep{ProbCirc20}, which are circuit classes that define probability distributions rather than Boolean functions. The field of probabilistic circuits also defines circuit properties analogous to those discussed in this document, and studies the complexity of various \emph{probabilistic queries} on different classes of probabilistic circuits. We could also study the connections to various tractable probabilistic models such as \emph{sum-product
networks}~\citep{poon2011sum}, \emph{arithmetic circuits}~\citep{darwiche2003differential}, \emph{cutset network}~\citep{rahman2014cutset}, \emph{and-or graphs}~\citep{dechter2007and}, \emph{probabilistic sentential decision diagrams}~\citep{kisa2014probabilistic}, and more, which can be understood via translations to various classes of probabilistic circuits.
\item The connections to other formalisms to represent languages over words,
  e.g., \emph{context-free grammars}, for which Boolean circuit representations are implicit in~\citep{amarilli2022efficient}, and which have recently been related to factorized databases in~\citep{kimelfeld2023unifying}.
\item The use of \emph{depth reduction} techniques, to rewrite circuits and
  binary decision diagrams to circuits of lower depth~\citep{valiant1983fast}
\item As a converse to provenance circuits, the \emph{conversion of circuits into automata}, although this is not obvious to define because, unlike automata, circuits generally do not have a ``uniform'' behavior.
\end{itemize}

\bibliographystyle{plainnat}
\bibliography{references}

\end{document}